\newif\ifcomm
    \newcounter{commentNumberI}
     \newcommand{\Isaac}[1]{\addtocounter{commentNumberI}{1}{{({\color{blue} {(\arabic{commentNumberI}.)} Isaac: #1})}}} %just color
      \newcommand{\Rami}[1]{\addtocounter{commentNumberI}{1}{{({\color{orange} {(\arabic{commentNumberI}.)} Rami: #1})}}} %just color
       \newcommand{\Gal}[1]{\addtocounter{commentNumberI}{1}{{({\color{red} {(\arabic{commentNumberI}.)} Gal: #1})}}} %just color
      \newcommand{\Yoni}[1]{\addtocounter{commentNumberI}{1}{{({\color{blue} {(\arabic{commentNumberI}.)} Yoni: #1})}}} %just color
    \newcommand{\Isaac}[1]{}
    \newcommand{\Gal}[1]{}
    \newcommand{\Yoni}[1]{}
    \newcommand{\Rami}[1]{}
\newtheorem{theorem}{Theorem}[section]
\newtheorem{proposition}[theorem]{Proposition}
\newtheorem{assumption}[theorem]{Assumption}
\newtheorem{remark}[theorem]{Remark}
\newtheorem{key observation}[theorem]{Key observation}
\newcommand{\twopartdef}[4]
{
	\left\{
		\begin{array}{ll}
			#1 & \mbox{if } #2 \\
			#3 & \mbox{if } #4
		\end{array}
	\right.
}%%%%%%%%%%%%%%%%%%%%%%%%%%%%%%%%%%%%%%%%%%%%%%
\begin{document}
%
% paper title
% Titles are generally capitalized except for words such as a, an, and, as,
% at, but, by, for, in, nor, of, on, or, the, to and up, which are usually
% not capitalized unless they are the first or last word of the title.
% Linebreaks \\ can be used within to get better formatting as desired.
% Do not put math or special symbols in the title.
\title{A Lower Bound on the stability region of Redundancy-$d$ with FIFO service discipline}
%
%
% author names and IEEE memberships
% note positions of commas and nonbreaking spaces ( ~ ) LaTeX will not break
% a structure at a ~ so this keeps an author's name from being broken across
% two lines.
% use \thanks{} to gain access to the first footnote area
% a separate \thanks must be used for each paragraph as LaTeX2e's \thanks
% was not built to handle multiple paragraphs
%
\author{Gal~Mendelson
\thanks{Gal~Mendelson is with the Electrical Engineering Faculty, Technion, Israel.}
}
\maketitle

% As a general rule, do not put math, special symbols or citations
% in the abstract or keywords.
\begin{abstract}
Redundancy-$d$ (R($d$)) is a load balancing method used to route incoming jobs to $K$ servers, each with its own queue. Every arriving job is replicated into $2 \leq d \leq K$ tasks, which are then routed to $d$ servers chosen uniformly at random. When the first task finishes service, the remaining $d{-}1$ tasks are cancelled and the job departs the system. 

Despite the fact that R($d$) is known, under certain conditions, to substantially improve job completion times compared to not using redundancy at all, little is known on a more fundamental performance criterion: what is the set of arrival rates under which the R($d$) queueing system with FIFO service discipline is stable? In this context, due to the complex dynamics of systems with redundancy and cancellations, existing results are scarce and are limited to very special cases with respect to the joint service time distribution of tasks. 

In this paper we provide a non-trivial, closed form lower bound on the stability region of R($d$) for a general joint service time distribution of tasks with finite first and second moments. We consider a discrete time system with Bernoulli arrivals and assume that jobs are processed by their order of arrival. We use the workload processes and a quadratic Lyapunov function to characterize the set of arrival rates for which the system is stable. While simulation results indicate our bound is not tight, it provides an easy-to-check performance guarantee.
\end{abstract}

% Note that keywords are not normally used for peerreview papers.
\begin{IEEEkeywords}
Redundancy Routing, Job Replication, Job Cancellation, Stability, Lyapunov Stability.
\end{IEEEkeywords}

% For peer review papers, you can put extra information on the cover
% page as needed:
% \ifCLASSOPTIONpeerreview
% \begin{center} \bfseries EDICS Category: 3-BBND \end{center}
% \fi
%
% For peerreview papers, this IEEEtran command inserts a page break and
% creates the second title. It will be ignored for other modes.
\IEEEpeerreviewmaketitle

\section{Introduction}
% The very first letter is a 2 line initial drop letter followed
% by the rest of the first word in caps.
% 
% form to use if the first word consists of a single letter:
% \IEEEPARstart{A}{demo} file is ....
% 
% form to use if you need the single drop letter followed by
% normal text (unknown if ever used by the IEEE):
% \IEEEPARstart{A}{}demo file is ....
% 
% Some journals put the first two words in caps:
% \IEEEPARstart{T}{his demo} file is ....
% 
% Here we have the typical use of a "T" for an initial drop letter
% and "HIS" in caps to complete the first word.

\IEEEPARstart{R}{edundancy} 
and cancellation based routing has attracted much attention in the last decade \cite{joshi2012coding, dean2013tail, ananthanarayanan2013effective, vulimiri2013low, gardner2017redundancy,gardner2016queueing,raaijmakers2019redundancy}. The basic motivation behind using redundancy and cancellation is reducing the tail of the job completion time distribution. The idea is to replicate a job and send its copies, referred to as \textit{tasks}, to different servers for processing. When the first task is finished being processed the job is deemed complete and leaves the network. The premise is that allowing copies of a job to traverse different paths in the network makes it highly improbable that all of the copies experience large queuing delay and/or processing time. 

Implementing such redundancy and cancellations mechanisms incurs an overhead which can include software, hardware, control, memory and computational power. The performance-cost trade-off of these schemes may well be worthwhile since the potential benefits in terms of performance is known in some cases to be substantial \cite{joshi2012coding, dean2013tail, ananthanarayanan2013effective, vulimiri2013low, gardner2017redundancy}.

In this paper we are concerned with a specific scheme called Redundancy-$d$ (R($d$)), used to route incoming jobs to $K$ servers, working at rate $\mu$, each with its own queue. Within each server, service is given by order of arrival (FIFO). Every arriving job is replicated into $2 \leq d \leq K$ tasks, which are then routed to $d$ distinct servers chosen uniformly at random. When the first task finishes service, the remaining $d{-}1$ tasks are cancelled and the job departs the system. 

Our main research question is concerned with a first order performance criterion of R($d$): what is the set of arrival rates under which the R($d$) system is stable? We refer to this set as \emph{the stability region}. Here and throughout we refer to a system as \emph{stable} if the underlying Markov chain describing the load (e.g. queue lengths, workloads) in the system is positive recurrent.  

For policies with no redundancy such as random routing or `join the shortest queue', it is well known that the queueing system is stable as long as the arrival rate $\lambda$ satisfies $\lambda \in [0,K\mu)$.
For R($d$), denoting by $B_1,\ldots,B_d$ the service time requirements of the $d$ tasks belonging to a single job, such that $\mathbbm{E}[B_i]=\mu^{-1}$, the stability region is known exactly only in two special cases:\\
\noindent (i) $B_1,\ldots,B_d$ are independent and exponentially distributed. Then, the stability region remains $\lambda \in [0,K\mu)$, for all values of $d$ \cite{gardner2016queueing}. \\
\noindent (ii) $d{=}K$ (full redundancy).
 Then the stability region of R($K$) is $\lambda \in [0,1/\mathbbm{E}[\wedge_{i=1}^KB_i])$ \cite{raaijmakers2019redundancy}.

The only other closed form result the author is aware of is a lower bound on the stability region of R($d$) such that if $\lambda \in [0,1/\mathbbm{E}[\wedge_{i=1}^dB_i])$ then the system is stable \cite{raaijmakers2019redundancy}. This lower bound is tight for $d=K$ but is limited in general because it does not depend on $K$.
Finally, the authors of \cite{anton2019stability} implicitly characterize the exact stability region in the case where $B_1,\ldots,B_d$ are identical (i.e. $B_1=\ldots=B_d$) and exponentially distributed. The stability condition is given in terms of the mean number of jobs in service in an associated
`saturated' system. Using this result, our lower bound can be  easily derived for this special case.

Characterizing the stability region of the R($d$) queueing system for $2{\leq} d{<}K$ is a challenging problem. The main difficulty lies in the fact that the amount of work that enters the system upon a job's arrival depends not only on its primitive service time requirement but also on the state of the system. This is due to the possibility of several servers working in parallel on tasks belonging to the same job, not even necessarily beginning at the same time. This results in a Markov chain with state-dependent transition probabilities, a difficult process to analyze. 

In this paper we consider a discrete time system with Bernoulli arrivals and assume that jobs are processed by their order of arrival (FIFO). Our main result is a lower bound on the stability region of R($d$) for a general joint service time distribution of tasks with finite first and second moments. Specifically, we allow a general dependence structure of the service time distribution of tasks belonging to the same job. 

Informally, given $K$, $d$ and the distribution of $(B_1,\ldots,B_d)$, let
\begin{equation*}
    \lambda_{lb}=\frac{K}{\sum_{m=0}^d \Big(\sum_{j=1}^{d-m}\mathbbm{E}[\wedge_{k=1}^jB_k]+m\mathbbm{E}[\wedge_{k=1}^dB_k] \Big)P_m},
\end{equation*}
where 
\begin{equation*}
    \mathbbm{P}_m=\frac{{{K-d}\choose{d-m}  }{{d}\choose{m}}}{{{K}\choose{d}}}.
\end{equation*}
Then if $\lambda \in [0,\lambda_{lb})$ the system is stable. The significance of this result is that it provides a first order performance guarantee in terms of the arrival rates the R($d$) system is able to support. 

In the derivation of $\lambda_{lb}$ we use the fact that given that the system is initially in a balanced state (e.g. empty), the $d$ largest workloads must always be equal \cite{raaijmakers2019redundancy}. For $0{\leq} m{\leq} d$, the quantity $\mathbbm{P}_m$ equals the probability that exactly $m$ of the $d$ servers with the largest workloads are chosen for a job's tasks. The lower bound $\lambda_{lb}$ is derived by bounding the expected amount of incoming work for different values of $m$.

Our bound coincides with the exact stability region for the two extreme cases of $d{=}1$ (no redundancy) and $d{=}K$ (full redundancy). Its relation to the only other known bound of $1/\mathbbm{E}[\wedge_{i=1}^dB_i]$ depends on $K$, $d$ and the distribution of $(B_1,\ldots,B_d)$. 
While no one bound implies the other in the general case, the dependence of $\lambda_{lb}$ on $K$ makes it valuable.

For example, after setting $d=2$ and using the fact that $P_0+P_1=1-P_2$, we obtain
\begin{align*}
    \lambda_{lb}&=\frac{K}{(1-P_2)\mathbbm{E}[B_1]+(1+P_2)\mathbbm{E}[B_1\wedge B_2]}.
\end{align*}
If we choose $K=5$ and $\mathbbm{E}[B_1]=4\mathbbm{E}[B_1\wedge B_2]$ then $P_2=0.1$ and we obtain
\begin{align*}
    \lambda_{lb}= \frac{5}{\frac{9}{10}\mathbbm{E}[B_1]+\frac{11}{10}\mathbbm{E}[B_1\wedge B_2]}> \frac{1}{\mathbbm{E}[\wedge_{k=1}^2 B_k]}.
\end{align*}
However, if we choose $K=3$ and $\mathbbm{E}[B_1]{=}4\mathbbm{E}[B_1\wedge B_2]$, then $P_2=1/3$ and we obtain
\begin{align*}
    \lambda_{lb}= \frac{3}{\frac{2}{3}\mathbbm{E}[B_1]+\frac{4}{3}\mathbbm{E}[B_1\wedge B_2]}< \frac{1}{\mathbbm{E}[\wedge_{k=1}^2 B_k]}.
\end{align*}
Thus, taking the maximum of $\lambda_{lb}$ and $1/\mathbbm{E}[\wedge_{i=1}^dB_i]$ yields a new and improved lower bound for the stability region of R($d$) with FIFO service discipline.

Apart from the closed form lower bound, we make the following additional contributions. The first is a rigorous derivation of a workload model of the R($d$) system. The second is a method of using the standard Lyapunov based technique for proving stability in a setting where the stability region is not a-priori known. The third is identifying a tighter lower bound on the stability region than $\lambda_{lb}$, which we denote by $\lambda_m$, given as a solution to a certain minimization problem. In fact we prove that if $\lambda \in [0,\lambda_m)$ then the system is stable, and then prove $\lambda_{lb}\leq \lambda_m$. Forth and final, we provide simulation results evaluating $\lambda_{lb}$ and comparing it to $1/\mathbbm{E}[\wedge_{k=1}^dB_k]$.

The rest of the paper is organized as follows. In Section \ref{sec:model} we rigorously derive a workload model for the system. In Section \ref{sec:main results} we present our main results and their proofs. Section \ref{sec:sim} is devoted to simultion results.

We use the following notation. For $K \in \mathbbm{N}$ write $[K]=\{1,\ldots,K\}$. For two random vectors $X$ and $Y$ write $X\overset{d}{=}Y$ for equality in distribution. Write $[x]^+$ for $x \vee 0$.
\section{Redundancy routing model}\label{sec:model}
Consider a time slotted system with a single dispatcher and $K$ homogeneous servers. Each server has an infinite size buffer in which a queue can form and the servers do not idle when there is work in the buffer. Each server completes a single unit of service when it has work to do and service is given by order of arrival, \emph{i.e.} FIFO. We assume a server can work on a job that has just arrived. \\

\noindent\textbf{Arrival} At each time slot $t \in \mathbbm{N}$, a job arrives to the dispatcher with probability $0> \lambda<1$, according to the value of a Bernoulli random variable 
(RV) $\mathbbm{1}_A(t)$, such that $\mathbbm{E}[\mathbbm{1}_A(t)]=\lambda$.\\

\noindent\textbf{Routing} When a job arrives, the dispatcher immediately sends $d$ replicas of the job, where $2\leq d \leq K$, to $d$ distinct servers. We refer to these replicas as \emph{tasks}. Denote by  ${\cal{G}}_d$ the set of all $d$-sized subsets of $[K]$. For each $t$, denote by $G_d(t)$ a set-valued RV taking values in ${\cal{G}}_d$ with equal probability. If a job arrives at time slot $t$ then $G_d(t)$ determines which $d$ servers will receive its tasks. We assume that $G_d(t)$ are independent and identically distributed (i.i.d) across time slots. When the first of the job's tasks finishes service, the remaining $d-1$ tasks are immediately cancelled and this marks the job's departure time from the system. For completeness, in the case where tasks are completed at the exact same time, we refer to the task in the smallest indexed server as completed and to the rest as cancelled.\\

\noindent\textbf{Service time distribution.} If a job arrives at time slot $t$, the service duration requirements for its tasks are determined by the random vector $\bar{B}(t){=}(B_1(t),\ldots,B_K(t))$ whose members take values in $\mathbbm{N}$. The quantity $B_i(t)$ represents the service time requirement of the task that is to be sent to server $i$, provided it is a member of $G_d(t)$. Denote $\bar{B}=\bar{B}(1)$.

The homogeneity of the servers is captured by the assumption that the distribution of $\bar{B}$ is symmetric with respect to its members, such that the joint distribution of any subset of $\bar{B}$ coincides with any other subset of the same size. Formally, 

\begin{assumption}[Homogeneity]
For every $k {\in} [K]$, $\{i_1,\ldots,i_k\}{\subset} [K]$ and $\{j_1,\ldots,j_k\}{\subset} [K]$ such that $i_1{<}\ldots{<}i_k$ and $j_1{<}\ldots{<}j_k$ respectively, we assume that
\begin{equation}\label{as:sy}
    (B_{i_1},\ldots,B_{i_k})\overset{d}{=}(B_{j_1},\ldots,B_{j_k}).
\end{equation}
\end{assumption}
For example, direct consequences of this assumption are $\mathbbm{E}[B_1]=\mathbbm{E}[B_2]$ and $\mathbbm{E}[B_1 \wedge B_2]=\mathbbm{E}[B_2 \wedge B_3]$. We also assume $\mathbbm{E}[B_1]<\infty$ and $\mbox{Var}(B_1)<\infty$.

We further assume that $\bar{B}(t)$ are i.i.d. across time slots. As a consequence, the service time requirements of jobs are i.i.d. However, we make no additional assumption on the distribution of $\bar{B}(t)$ for each $t$. Therefore \emph{the service requirements of tasks belonging to the same job may be dependent}. Finally, the arrival and service processes are assumed to be independent.\\

\noindent\textbf{Time scaling.} To avoid the possibility of $\lambda=1$ being inside the stability region, we scale time appropriately. This translates to an assumption on $\bar{B}$ as follows. The average amount of work that enters the system upon a job's arrival is bounded from below by the average of the minimum of the service requirements of its tasks, \emph{i.e.} $\mathbbm{E}[\wedge_{i=1}^d B_i]$. Note that it does not matter which $d$ members we take due to Assumption \ref{as:sy}. The servers clear at most $K$ units of work, so we require  \begin{equation}\label{eq:time scale condition}
\mathbbm{E}\big[\wedge_{i=1}^d B_i\big]>K,
\end{equation}
such that the largest possible $\lambda$ for which the system is stable is strictly smaller than 1. \\

\noindent\textbf{Workload.}
Denote by $W_i(t)$ the workload in buffer $i$ at time slot $t$, \emph{after} a possible arrival and service. This is defined as the amount of time it will take (in time slots) for the existing tasks in the buffer (including the one in service if there is any) to leave the system. Note that tasks can leave the system either due to service completion or due to cancellation. Also, due to the FIFO service discipline, the workloads depend only on present tasks in the buffers and not on future arrivals. This is not true for other service disciplines such as last-in-first-out or processor-sharing. 

Denote $\bar{W}(t){=}(W_1(t),\ldots,W_K(t))$ and $\bar{W}{=}\{\bar{W}(t)\}_{t{\in}\mathbbm{N}}$. We refer to $\bar{W}$ as the \emph{workload process}. We assume that the system starts empty, i.e.
\begin{equation}\label{eq:empty}
W_i(0)=0, \mbox{   } \forall i \in [K].
\end{equation}

%%%%%%%%%%%%%%%%%%%%%%%%%%%%%%%%%55
%%%%%%%%%%%%%%%%%%%%%%%%%%%%%%%%%55
\noindent\textbf{Markov chain formulation.}
We turn to analyze the dynamics of the workload process. To this end,
denote 
\begin{equation}\label{eq:delta}
  \Delta_{i,j}(t)=W_j(t)-W_i(t).
\end{equation}
and let $\mathbbm{1}_i(t)$ denote the indicator RV which equals $1$ if server $i$ is a member of the $d$ chosen servers at time slot $t$.  Then
\begin{equation}\label{eq:i is chosen}
    \mathbbm{E}[\mathbbm{1}_i(t)]=\mathbbm{P}(i\in G_d(t))=\frac{{{K-1}\choose{d-1}}}{{K \choose d}}=\frac{d}{K}.
\end{equation}
\begin{proposition}
For each $i \in [K]$ we have 
\begin{align}\label{eq:dynamics}
    &W_i(t)=[W_i(t-1)+\mathbbm{1}_A(t)A_i(t)-1]^+, \mbox{ where }\cr
    &A_i(t)=\mathbbm{1}_i(t)\Big(\wedge_{j\in G_d(t)} [B_j(t)+\Delta_{i,j}(t{-}1)]^+\Big).
\end{align}
\end{proposition}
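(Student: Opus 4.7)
The proof splits into cases on the arrival indicator and whether server $i$ receives a task. When $\mathbbm{1}_A(t)=0$ or $i\notin G_d(t)$, we have $A_i(t)=0$ and the claim reduces to $W_i(t)=[W_i(t-1)-1]^+$, which follows directly from the definition: the server completes one unit of service exactly when it has work. The substantive case is $\mathbbm{1}_A(t)=1$ with $i\in G_d(t)$, where the goal is to show that the effective amount of work the newly arrived job contributes to server $i$ equals $\wedge_{j\in G_d(t)}[B_j(t)+\Delta_{i,j}(t-1)]^+$.

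The first step is to compute the departure time $T^*$ of the job arriving at $t$. The key observation, relying on FIFO, is that the arrival at time $t$ places the new tasks at the tail of each queue indexed by $G_d(t)$, leaving the processing timeline of every pre-$t$ task unaffected (pre-$t$ cancellations are triggered only by pre-$t$ siblings). Hence for each $j\in G_d(t)$, server $j$ clears its pre-$t$ backlog over the slots $t,\ldots,t+W_j(t-1)-1$ and begins the new task at slot $t+W_j(t-1)$. Absent cancellation, that task would finish at the end of slot $t+W_j(t-1)+B_j(t)-1$, so $T^*=t-1+\wedge_{j\in G_d(t)}(W_j(t-1)+B_j(t))$, at which instant all surviving siblings are cancelled.

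The effective work added to server $i$ is the number of slots it actually processes its own copy of the new task, namely $r_i(t)=\max(0,\,T^*-t-W_i(t-1)+1)$: zero if $T^*<t+W_i(t-1)$ (cancelled before server $i$ ever reaches it) and the corresponding processing range otherwise. Substituting the expression for $T^*$ gives $r_i(t)=[\wedge_{j\in G_d(t)}(B_j(t)+\Delta_{i,j}(t-1))]^+$, and the elementary identity $[\wedge_j x_j]^+=\wedge_j [x_j]^+$ (verified by cases on the sign of the minimum) shows this equals $A_i(t)$. Combining $r_i(t)$ with the one unit of service completed in slot $t$, and applying $[\cdot]^+$ to prevent negativity when the server is empty, yields the stated update.

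The main obstacle I anticipate is in the second paragraph: carefully justifying that $W_j(t-1)$ is exactly the number of slots until server $j$'s pre-$t$ backlog is cleared, invariant to the arrival at $t$. This rests on two structural facts about R($d$) under FIFO that must be made explicit, namely that new work goes to the tail of the queue, and that any cancellation of a pre-$t$ task is driven by a pre-$t$ sibling, so the newly arrived tasks neither reorder nor advance the cancellation of existing work. Once these are stated, the remaining algebra is routine.
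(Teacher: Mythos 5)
Your proof is correct and follows essentially the same route as the paper's: both identify that the job departs when the first of its $d$ tasks would complete, that this truncates server $i$'s workload to $\wedge_{j\in G_d(t)}(W_j(t-1)+B_j(t))$ when positive relative to $W_i(t-1)$, and both establish the interchange of $[\cdot]^+$ with the minimum (the paper does this via the explicit minimizer $j^*=\mbox{argmin}_{j\in G_d(t)}\{W_j(t-1)+B_j(t)\}$, you via the identity $[\wedge_j x_j]^+=\wedge_j[x_j]^+$, which amounts to the same case analysis). Your explicit justification that the arrival at $t$ leaves the pre-$t$ backlog timeline untouched under FIFO is a point the paper's proof uses implicitly, so nothing is missing.
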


\begin{proof}
The first part of  \eqref{eq:dynamics} is a standard balance equation. The workload at time $t$ equals the workload at time $t-1$ plus arrival minus service, and is kept non-negative. The second part of \eqref{eq:dynamics} is less trivial and captures the complexity of the R($d$) model with FIFO service discipline. 

The quantity $A_i(t)$ represents the total amount of work server $i$ receives at time slot $t$ provided a job arrives. If $i{\notin} G_d(t)$ then $\mathbbm{1}_i(t)=0$ and we have $A_i(t)=0$. Otherwise, the amount of work that server $i$ receives depends on the workload in the other $d-1$ members of $G_d(t)$, as well as the service time requirements of the arriving job's tasks. 

Considering the FIFO service discipline within each server and the definition of the workloads, the task that reaches a server first out of the $d$ tasks is the one that is sent to the server with the least workload. Considering the cancellation mechanism, the task that finishes processing first out of the $d$ tasks is the one that is sent to server $j$ for which $W_j(t{-}1)+B_j(t)$ is minimal. Denote this server by $j^*{=}\mbox{argmin}_{j\in G_d(t)}\{W_j(t{-}1)+B_j(t)\}$, where in the case of several minimizers, the smallest index is returned. 

If  $B_{j*}(t){+}W_{j_*}(t{-}1)\leq W_i(t{-}1)$, or, written differently using \eqref{eq:delta}, $B_{j*}(t)+\Delta_{i,j^*}(t{-}1) \leq 0$, the task that arrives to server $i$ will be cancelled before being processed and $A_i(t)=0$. If $B_{j*}(t)+\Delta_{i,j^*}(t{-}1) > 0$, then by the definition of $j^*$ the task in server $i$ will be cancelled at the exact same time server $j^*$ completes its task. Thus, the workload at server $i$ will be truncated and will equal $W_{j^*}(t{-}1){+}B_{j^*}(t)$ and the amount of work server $i$ receives equals 
$$W_{j^*}(t{-}1){+}B_{j^*}(t){-}W_i(t{-}1) =B_{j^*}(t)+\Delta_{i,j^*}(t{-}1).$$
Note that $\Delta_{i,j^*}(t{-}1)$ may be negative. Overall, we obtain
\begin{equation}\label{eq:dynamics 2}
  A_i(t)=\mathbbm{1}_i(t) [B_{j^*}(t)+\Delta_{i,j^*}(t{-}1)]^+.  
\end{equation} 
To connect \eqref{eq:dynamics 2} with \eqref{eq:dynamics}, we argue that
\begin{equation}\label{eq:equiv of j}
    \wedge_{j\in G_d(t)} [B_j(t)+\Delta_{i,j}(t{-}1)]^+=[B_{j^*}(t)+\Delta_{i,j^*}(t{-}1)]^+.
\end{equation}
Indeed, if  $B_{j*}(t)+\Delta_{i,j^*}(t{-}1) \leq 0$, the right hand side of \eqref{eq:equiv of j} is zero, and since $j^*\in G_d(t)$, the left hand side of \eqref{eq:equiv of j} equals zero as well. If  $B_{j*}(t)+\Delta_{i,j^*}(t{-}1) > 0$, by the definition of $j^*$, for $j \in G_d(t)$, we have 
\begin{align*}
    B_j(t)+\Delta_{i,j}(t{-}1)&=B_j(t)+W_j(t{-}1)-W_i(t{-}1)\cr
    &\geq B_{j^*}(t)+W_{j^*}(t{-}1)-W_i(t{-}1)\cr
    &=B_{j*}(t)+\Delta_{i,j^*}(t{-}1)>0,
\end{align*}
which completes the proof.
\end{proof}

 Define the state space $\cal{S}$ of $\bar{W}$ as all members of $\mathbbm{Z}^K$ that can be reached from an empty state. Equations \eqref{eq:empty}-\eqref{eq:dynamics} uniquely define the process $\bar{W}$ as a Markov chain on $\cal{S}$. Since $\mathbbm{P}(\mathbbm{1}_A(t){=}0){>}0$, all states in $\cal{S}$ communicate and the empty state has a self transition. Therefore, $\bar{W}$ is irreducible and a-periodic.

\begin{remark}
Since we have assumed the system starts empty \eqref{eq:empty}, by Property 1 in \cite{raaijmakers2019redundancy}, the $d$ largest workloads are always equal. Thus the largest $d$ components of any member of ${\cal{S}}$ must be equal.
\end{remark}
\begin{remark}
The R($d$) policy routes tasks to $d$ servers chosen uniformly at random. It does not use workload or service time information, which we only use for modelling and stability analysis.
\end{remark}
%%%%%%%%%%%%%%%%%%%%%%%%%%%%%%%%%%%%%%%%%%%%%%%%%%%%%%
%%%%%%%%%%%%%%%%%%%%%%%%%%%%%%%%%%%%%%%%%%%%%%%%%%%%%%
\section{Main results}\label{sec:main results}
We first state our results, then discuss them in detail. The proofs immediately follow.
\subsection{Statements}
Our first result identifies a non-trivial lower bound on the stability region, given as the solution of a certain minimization problem. This lower bound, which we denote by $\lambda_m$, satisfies that if $\lambda \in [0,\lambda_m)$ then $\bar{W}$ is positive recurrent. Our second result, which is the main result of this paper, is a closed form formula for a lower bound $\lambda_{lb}$ on the stability region,  satisfying $\lambda_{lb} \leq \lambda_m$. To this end, define the space of ordered states in ${\cal{S}}$ by 
\begin{equation}\label{eq:def of domain 0}
    {\cal{S}}_0=\{\bar{s}\in {\cal{S}}: s_1\leq   \ldots \leq s_{K-d+1}= \ldots=s_K\},
\end{equation}
and define the space of vectors capturing the differences between the coordinates of members of ${\cal{S}}_0$ by
\begin{align}\label{eq:def of domain}
    {\cal{D}}_{{\cal{S}}_0}=\{\bar{\delta}\in \mathbbm{Z}_{+}^{K-1}:&\exists \bar{s} \in {\cal{S}}_0 \mbox{ such that }  \delta_{i}=s_{i+1}-s_i,\cr
    &\mbox{ for } 1\leq i \leq K-1\}.
\end{align}
Note that $\delta_{K-d+1}=\ldots=\delta_{K-1}=0$. For $\bar{\delta}\in {\cal{D}}_{{\cal{S}}_0}$, define
\begin{equation}\label{eq:def of deltas}
    \delta_{i,j}=\sum_{k=i}^{j-1}\delta_k.
\end{equation}
For ease of notation, denote
\begin{equation}\label{eq:simple notation}
    (G_d,B_j,\mathbbm{1}_A,\mathbbm{1}_i)=(G_d(1),B_j(1),\mathbbm{1}_A(1),\mathbbm{1}_i(1)).
\end{equation}
Define
\begin{equation}\label{eq:la1}
    \lambda_m=\inf_{\bar{\delta}\in {\cal{D}}_{{\cal{S}}_0}}\Bigg\{\frac{K}{\sum_{i \in [K]}\mathbbm{E}\big[\mathbbm{1}_i\big(\wedge_{j\in G_d} [B_j+\delta_{i,j}]^+\big)\big]}\Bigg\}.
\end{equation}

\begin{proposition}\label{thm:max_problem}
If $\lambda \in [0,\lambda_m)$ then $\bar{W}$ is positive recurrent.   
\end{proposition}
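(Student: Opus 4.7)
The plan is to apply the Foster-Lyapunov drift criterion to $\bar W$ using the quadratic Lyapunov function $V(\bar w) = \sum_{i=1}^K w_i^2$. Since $\bar W$ is irreducible and aperiodic on ${\cal{S}}$, positive recurrence will follow once the one-step drift $\mathbbm{E}[V(\bar W(t)) - V(\bar w) \mid \bar W(t-1)=\bar w]$ is shown to be bounded above by a strictly negative quantity outside a finite subset of ${\cal{S}}$.

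I would first expand the drift. Using the dynamics \eqref{eq:dynamics} together with the elementary inequality $([x]^+)^2 \leq x^2$ valid for every $x \in \mathbbm{R}$, the drift is bounded by
\[\mathbbm{E}[V(\bar W(t)) - V(\bar w) \mid \bar w] \leq 2\sum_{i=1}^K w_i\bigl(\lambda f_i(\bar\delta) - 1\bigr) + C,\]
where $f_i(\bar\delta) := \mathbbm{E}[A_i(1) \mid \bar W(0)=\bar w]$ is, in view of \eqref{eq:dynamics} and \eqref{eq:def of deltas}, a function of the difference vector $\bar\delta$ alone, and $C$ is a finite constant independent of $\bar w$. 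The boundedness of $C$ would rest on the pointwise estimate $A_i \leq \mathbbm{1}_i B_i$ (since $\delta_{i,i}=0$, the value $B_i$ itself appears inside the minimum in \eqref{eq:dynamics}), the independence of $\mathbbm{1}_i$ and $B_i$, the identity $\mathbbm{E}[\mathbbm{1}_i]=d/K$ from \eqref{eq:i is chosen}, and the assumption $\mbox{Var}(B_1)<\infty$.

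The crux of the argument, and the step I expect to be the main obstacle, is to bound $\sum_i w_i f_i(\bar\delta)$ in terms of the definition of $\lambda_m$. Permuting servers so that $w_1 \leq \cdots \leq w_K$ (legitimate because $\bar B$ enjoys the symmetry \eqref{as:sy} and $G_d$ is uniformly random), I would establish the antitone monotonicity
\[f_1(\bar\delta) \geq f_2(\bar\delta) \geq \cdots \geq f_K(\bar\delta).\]
The argument is a coupling over $G_d$: for $i<i'$ and any $d$-subset $G$ containing $i$, set $\phi(G)=G$ if $i'\in G$ and $\phi(G)=(G\setminus\{i\})\cup\{i'\}$ otherwise; this defines a bijection between $\{G:i\in G\}$ and $\{G:i'\in G\}$. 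Using \eqref{as:sy} to align the joint distribution of $(B_j)_{j\in G}$ with that of $(B_j)_{j\in\phi(G)}$, the comparison of paired contributions reduces to the pointwise inequality $[B_s + \delta_{i,s}]^+ \geq [B_s + \delta_{i',s}]^+$, which holds because $w_i\leq w_{i'}$ forces $\delta_{i,s}\geq\delta_{i',s}$ for every $s$.

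With the monotonicity in hand, Chebyshev's sum inequality yields $\sum_i w_i f_i(\bar\delta) \leq \tfrac{1}{K}\sum_i w_i \cdot \sum_i f_i(\bar\delta)$, and by \eqref{eq:la1}, $\sum_i f_i(\bar\delta) \leq K/\lambda_m$ uniformly over $\bar\delta\in{\cal{D}}_{{\cal{S}}_0}$. Setting $\eta := 1-\lambda/\lambda_m > 0$, the drift bound becomes $\mathbbm{E}[\Delta V \mid \bar w] \leq -2\eta \sum_{i=1}^K w_i + C$, which is strictly negative whenever $\sum_i w_i > C/(2\eta)$. Since this condition defines a finite subset of ${\cal{S}}$, the Foster-Lyapunov theorem yields positive recurrence of $\bar W$.
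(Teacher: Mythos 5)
Your proposal is correct and its skeleton matches the paper's: Foster--Lyapunov with the quadratic function $\sum_i w_i^2$, expansion of the one-step drift into the cross term $2\sum_i w_i(\lambda f_i-1)$ plus a uniformly bounded constant, reduction to an ordered state whose difference vector lies in ${\cal{D}}_{{\cal{S}}_0}$, and the monotonicity $f_1\geq\cdots\geq f_K$ --- which the paper isolates as Proposition \ref{prop:average workload} and proves by the same swap-$i$-for-$i'$ coupling on $G_d$ that you spell out (the paper only does adjacent indices; your explicit bijection $\phi$ is, if anything, more careful). The one step you handle genuinely differently is the cross term. The paper defines $d_i$ as in \eqref{eq:def of d}, locates the first index $k_0$ at which the partial sums of the $d_i$ turn negative, and runs the Abel-summation-style induction \eqref{eq:ind}--\eqref{eq:done} to get $\sum_i s_i d_i\leq \gamma s_K$ with $\gamma<0$, then picks the cutoff $C_2$ so that $2\gamma C_2+C_3<-\epsilon$. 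You instead apply Chebyshev's sum inequality for oppositely ordered sequences to get $\sum_i w_i f_i\leq \tfrac{1}{K}\big(\sum_i w_i\big)\big(\sum_i f_i\big)$ and, via $\sum_i f_i\leq K/\lambda_m$, the bound $-2(1-\lambda/\lambda_m)\sum_i w_i+C$. This buys two things: a drift proportional to $-\sum_i w_i$ with the explicit coefficient $\eta=1-\lambda/\lambda_m$, and the avoidance of a uniformity issue the paper glosses over, namely that $\gamma$ in \eqref{eq:gamma} depends on the state $\bar{s}$ and must be bounded away from zero over ${\cal{S}}\setminus F$ for the choice of $C_2$ to work. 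The two arguments are close cousins --- the paper's induction is essentially a hand-rolled instance of the rearrangement inequality you invoke --- but your packaging is cleaner. I see no gaps.
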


We now state the main result of this paper. To this end, for $0\leq m \leq d$, define
\begin{equation}\label{eq:omega m}
    \Omega_m=\{\mbox{choosing m out of d largest workloads}\}.
\end{equation}
A simple calculation yields
\begin{equation}\label{eq:pm}
    \mathbbm{P}_m:=\mathbbm{P}(\Omega_m)=\frac{{{K-d}\choose{d-m}  }{{d}\choose{m}}}{{{K}\choose{d}}}.
\end{equation}
Define
{\small
\begin{equation}\label{eq:la2}
    \lambda_{lb}=\frac{K}{\sum_{m=0}^d \Big(\sum_{j=1}^{d-m}\mathbbm{E}[\wedge_{k=1}^jB_k]+m\mathbbm{E}[\wedge_{k=1}^dB_k] \Big)P_m}
\end{equation}
\par}
\begin{theorem}\label{cor:la2}
Let $\lambda_m$ and $\lambda_{lb}$ defined as in \eqref{eq:la1} and \eqref{eq:la2} respectively. Then
\begin{equation*}
    0 <\lambda_{lb} \leq \lambda_m < 1.
\end{equation*}
Specifically, if $\lambda \in [0,\lambda_{lb})$ then $\bar{W}$ is positive recurrent.
\end{theorem}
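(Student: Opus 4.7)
The final sentence of the theorem is an immediate consequence of Proposition~\ref{thm:max_problem} once $\lambda_{lb}\leq\lambda_m$ is in hand, so the work reduces to the three numerical statements $0<\lambda_{lb}$, $\lambda_m<1$, and $\lambda_{lb}\leq\lambda_m$. The first is immediate: under finiteness of $\mathbbm{E}[B_1]$, every term $\mathbbm{E}[\wedge_{k=1}^{j}B_k]$ in the denominator of \eqref{eq:la2} is finite, so $\lambda_{lb}>0$. For $\lambda_m<1$, I plan to evaluate the infimum in \eqref{eq:la1} at the empty-state choice $\bar\delta=0\in{\cal D}_{{\cal S}_0}$: the sum collapses to $\mathbbm{E}\!\left[\sum_{i\in G_d}\wedge_{j\in G_d}B_j\right]=d\,\mathbbm{E}[\wedge_{k=1}^{d}B_k]$, and the time-scaling assumption \eqref{eq:time scale condition} gives $d\,\mathbbm{E}[\wedge_{k=1}^{d}B_k]>dK\geq K$, whence $\lambda_m\leq K/(d\,\mathbbm{E}[\wedge_{k=1}^{d}B_k])<1$.

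\textbf{The core inequality.} Since $\lambda_m$ is an infimum, $\lambda_{lb}\leq\lambda_m$ follows from the bound
\begin{equation*}
\sum_{i\in[K]}\mathbbm{E}\!\left[\mathbbm{1}_i\,\wedge_{j\in G_d}[B_j+\delta_{i,j}]^{+}\right]\leq\sum_{m=0}^{d}\!\Bigl(\sum_{j=1}^{d-m}\mathbbm{E}[\wedge_{k=1}^{j}B_k]+m\,\mathbbm{E}[\wedge_{k=1}^{d}B_k]\Bigr)\mathbbm{P}_m
\end{equation*}
for every $\bar\delta\in{\cal D}_{{\cal S}_0}$. Fix a representative $\bar s\in{\cal S}_0$ with $\delta_{i,j}=s_j-s_i$ and rewrite the left-hand side as $\mathbbm{E}\bigl[\sum_{i\in G_d}\wedge_{j\in G_d}[B_j+s_j-s_i]^{+}\bigr]$. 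Then condition on the events $\Omega_m$ of \eqref{eq:omega m}: given $\Omega_m$, write the members of $G_d$ as bottom indices $i_1<\cdots<i_{d-m}\leq K-d$ with sorted workloads $s_{i_1}\leq\cdots\leq s_{i_{d-m}}\leq w$, together with $m$ top indices all carrying the common maximum workload $w=s_{K-d+1}=\cdots=s_K$ (the top-$d$ workloads coincide by Property~1 of \cite{raaijmakers2019redundancy}, as recalled in the first remark of Section~\ref{sec:model}).

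\textbf{Bounds under $\Omega_m$.} For each of the $m$ top members $i$, $s_i=w$ is maximal over $G_d$, hence $[B_j+s_j-w]^{+}\leq B_j$ for every $j\in G_d$ and $\wedge_{j\in G_d}[B_j+s_j-w]^{+}\leq\wedge_{j\in G_d}B_j$, which has expectation $\mathbbm{E}[\wedge_{k=1}^{d}B_k]$ by Assumption~\ref{as:sy}; summing over top members produces $m\,\mathbbm{E}[\wedge_{k=1}^{d}B_k]$. For $i=i_k$, the $k$-th smallest bottom member, I plan to shrink the min from $j\in G_d$ down to the subset $\{i_1,\ldots,i_k\}$ whose workloads satisfy $s_{i_\ell}\leq s_{i_k}$; then $[B_{i_\ell}+s_{i_\ell}-s_{i_k}]^{+}\leq B_{i_\ell}$ for each $\ell\leq k$, so $\wedge_{j\in G_d}[B_j+s_j-s_{i_k}]^{+}\leq\wedge_{\ell=1}^{k}B_{i_\ell}$, whose expectation equals $\mathbbm{E}[\wedge_{j=1}^{k}B_j]$ by Assumption~\ref{as:sy}. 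Summing over $k=1,\ldots,d-m$ yields the bottom contribution $\sum_{j=1}^{d-m}\mathbbm{E}[\wedge_{k=1}^{j}B_k]$, and weighting by $\mathbbm{P}_m$ and summing over $m$ reproduces the right-hand side of the desired bound exactly.

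\textbf{Main obstacle.} The delicate step is the construction of the rank-$k$ sub-min for the bottom contribution: it relies on the ordered structure of $\bar s\in{\cal S}_0$ to guarantee $s_{i_\ell}-s_{i_k}\leq 0$ for $\ell\leq k$, and on the full symmetry of Assumption~\ref{as:sy} to reduce the expectation of $\wedge_{\ell=1}^{k}B_{i_\ell}$ at arbitrary indices to $\mathbbm{E}[\wedge_{j=1}^{k}B_j]$, which depends only on cardinality. Everything else is bookkeeping with the $\Omega_m$ partition and the combinatorial identity $\mathbbm{P}(\Omega_m)=\mathbbm{P}_m$ of \eqref{eq:pm}.
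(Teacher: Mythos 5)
Your proposal is correct and follows essentially the same route as the paper: the core inequality $\lambda_{lb}\leq\lambda_m$ is obtained, exactly as in the paper's proof, by partitioning over the events $\Omega_m$, bounding each of the $m$ top members' contributions by $\wedge_{j\in G_d}B_j$ and the $k$-th bottom member's contribution by the rank-$k$ sub-minimum $\wedge_{\ell=1}^{k}B_{i_\ell}$ (using the ordering of the state and the nonpositivity of the relevant workload differences), then invoking exchangeability of $\bar{B}$ and its independence from $G_d$ to reduce to $\mathbbm{E}[\wedge_{j=1}^{k}B_j]$ and weighting by $\mathbbm{P}_m$. The only (harmless) deviation is in proving $\lambda_m<1$: you evaluate the infimum at $\bar\delta=0\in{\cal D}_{{\cal S}_0}$ to get $\lambda_m\leq K/(d\,\mathbbm{E}[\wedge_{k=1}^{d}B_k])$, whereas the paper lower-bounds the denominator uniformly over $\bar\delta$ by the single term $\mathbbm{E}[\wedge_{k=1}^{d}B_k]$; both arguments rest on the time-scaling assumption \eqref{eq:time scale condition} and yield the claim.
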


%%%%%%%%%%%%%%%%%%%%%%%%%%%%%%%%%%%%%%%%%%%%%%%%
%%%%%%%%%%%%%%%%%%%%%%%%%%%%%%%%%%%%%%%%%%%%%%%%%

\subsection{Intuition and discussion}
Before proving Proposition \ref{thm:max_problem} and Theorem \ref{cor:la2}, we discuss \eqref{eq:la1} and \eqref{eq:la2} in detail.\\

%%%%%%%%%%%%%%%%%%%%%%%%%%%%%%%%%%%%%%%%%%%%%%%%
%%%%%%%%%%%%%%%%%%%%%%%%%%%%%%%%%%%%%%%%%%%%%%%%%

\noindent\textbf{Intuition for $\lambda_m$.} The basic idea is as follows. By \eqref{eq:dynamics}, the average amount of incoming work at time slot $t$ is given by 
{\small
\begin{align}\label{eq:avg work}
    &\mathbbm{E}[\mathbbm{1}_A(t)\sum_{i\in[K]}A_i(t)]\cr
    &=\lambda \mathbbm{E}\Big[\sum_{i\in[K]}\mathbbm{1}_i(t)\Big(\wedge_{j\in G_d(t)} [B_j(t)+\Delta_{i,j}(t{-}1)]^+\Big)\Big].
\end{align}
\par}
This quantity depends on the state at the end of time slot $t-1$ only through $\{\Delta_{i,j}(t{-}1)\}$, i.e. the difference between the workloads. We \emph{require} that for all relevant values of $\{\Delta_{i,j}(t{-}1)\}$ the right hand side of \eqref{eq:avg work} is less than or equal to $K$ and find the largest $\lambda$ for which this still holds. This is a notion of sub-criticality. The challenge is to prove that this is sufficient for stability for the R($d$) system. 

The infimum in \eqref{eq:la1} is taken over ${\cal{D}}_{{\cal{S}}_0}$ and not $\mathbbm{Z}^{K-1}$ for two reasons. First, by the symmetry of the servers, it is sufficient to consider only ordered states in $\cal{S}$ of the form $s_1\leq s_2\leq \ldots \leq s_K$, i.e. $\Delta_{i,j}(t{-}1)\in\mathbbm{Z}_{+}$, whenever $i\leq j$. Second, all states reached by $\bar{W}$ must have that the $d$ largest workloads are equal, namely $\Delta_{K-d+1,j}(t{-}1)=0$ for $j=K-d+1,\ldots,K$. \\

%%%%%%%%%%%%%%%%%%%%%%%%%%%%%%%%%%%%%%%%%%%%%%%%
%%%%%%%%%%%%%%%%%%%%%%%%%%%%%%%%%%%%%%%%%%%%%%%%%

\noindent\textbf{Intuition for $\lambda_{lb}$.} Suppose a job arrives and $d$ corresponding tasks are sent to $d$ distinct servers. Further suppose that exactly $m$ of the servers with the largest workloads are chosen, where $0 \leq m \leq d$. For simplicity, assume that the chosen servers are $\{1,\ldots,d\}$, such that $W_1\leq \ldots \leq W_d$ and that the $m$ servers $d-m+1,\ldots,d$ have the largest workload in the system, implying  $W_{d-m+1}=W_{d-m+2}=\ldots=W_d$. Then, server $1$ receives at most $B_1$ units of work, server 2 at most $B_1 \wedge B_2$ and up to server $d-m$ receiving at most $B_1 \wedge \ldots \wedge B_{d-m}$. The last $m$ servers receive at most $B_1 \wedge \ldots \wedge B_{d}$. Taking expectation and summing over possible values of $m$ gives an upper bound on the expected amount of work that enters the system upon a job's arrival, which yields the closed form expression of $\lambda_{lb}$ in \eqref{eq:la2}.\\

%%%%%%%%%%%%%%%%%%%%%%%%%%%%%%%%%%%%%%%%%%%%%%%%
%%%%%%%%%%%%%%%%%%%%%%%%%%%%%%%%%%%%%%%%%%%%%%%%%

\noindent\textbf{Why we included $\lambda_{m}$ in this paper.}
One can prove that $\lambda_{lb}$ is a lower bound on the stability region without resorting to $\lambda_m$ at all. This requires a minor modification of the proof of Proposition \ref{thm:max_problem} and some elements from the proof of Theorem \ref{cor:la2}. 

However, we feel that $\lambda_m$ is interesting in its own right. Indeed, given $K$, $d$ and the distribution of $\bar{B}$, if one can solve for $\lambda_m$ numerically then one may obtain a tighter bound than $\lambda_{lb}$. But, this is not trivial. First, the set ${\cal{D}}_{{\cal{S}}_0}$ is infinite. Second, different distributions of $\bar{B}$ may change the set of possible states the system can reach resulting in different sets ${\cal{D}}_{{\cal{S}}_0}$ which, in turn, may be difficult to characterize. Third, for each member of ${\cal{D}}_{{\cal{S}}_0}$, one must explicitly calculate the expected value of the minimum of several functions of $\bar{B}$. This may be computationally expensive for certain distributions. We leave this as an open problem.
\\
%%%%%%%%%%%%%%%%%%%%%%%%%%%%%%%%%%%%%%%%%%%%%%%%
%%%%%%%%%%%%%%%%%%%%%%%%%%%%%%%%%%%%%%%%%%%%%%%%%

\noindent\textbf{The special case of $d=1$ (no replication).} In this case $\vert G_d(t) \vert =1$ and every incoming job is randomly routed to one of the servers with equal probability. Together with the fact that by definition $\Delta_{i,i}(t{-}1)=0, \forall i \in [K]$,  Equation \eqref{eq:avg work} reduces to 
\begin{align*}%\label{eq:avg work d1}
    \mathbbm{E}[\mathbbm{1}_A(t)\sum_{i\in[K]}A_i(t)]
    =\lambda \mathbbm{E}\Big[\sum_{i\in[K]}\mathbbm{1}_i(t)B_i(t)\Big]=\lambda \mathbbm{E}\Big[B_1(t)\Big]
\end{align*}
and the solution to \eqref{eq:la1} is given by $\lambda_m=K/\mathbbm{E}[B_1]=K\mu,$
as expected. As for $\lambda_{lb}$, substituting $d=1$ in \eqref{eq:la2} and using the convention that $\sum_{j=1}^0 =0$ yields $\lambda_{lb}=\lambda_m$.\\

%%%%%%%%%%%%%%%%%%%%%%%%%%%%%%%%%%%%%%%%%%%%%%%%
%%%%%%%%%%%%%%%%%%%%%%%%%%%%%%%%%%%%%%%%%%%%%%%%%

\noindent\textbf{The special case of $d=K$ (full replication).} 
In this case the workloads of all of the servers are equal and all servers are chosen for each job. Thus, $\mathbbm{P}_K =1$, and by \eqref{eq:la2}, $\lambda_{lb}=1/\mathbbm{E}[\wedge_{i=1}^KB_i]$ as expected.\\

%%%%%%%%%%%%%%%%%%%%%%%%%%%%%%%%%%%%%%%%%%%%%%%%
%%%%%%%%%%%%%%%%%%%%%%%%%%%%%%%%%%%%%%%%%%%%%%%%%

\noindent\textbf{Example calculation.} 
Consider the case where $K=3$, $d=2$ and the joint service time distribution $\bar{B}=(B_1,B_2,B_3)$ satisfies that $B_1,B_2 \text{ and } B_3$ are i.i.d and 
\begin{equation*}
    B_1=
    \begin{cases}
    \alpha & \text{with probability } p \\
    \beta & \text{with probability } 1-p ,
    \end{cases}
\end{equation*}
where $0\leq p \leq 1$ and $\alpha,\beta \in \mathbbm{N}$. By \eqref{eq:pm} we have 
$P_0=0 , P_1=2/3 \text{ and } P_2=1/3,$
and a straightforward calculation yields
\begin{align*}%\label{eq:example E}
    &\mathbbm{E}[B_1]=\alpha p+\beta (1-p) \cr
    &\mathbbm{E}[B_1\wedge B_2]=\alpha (1-(1-p)^2)+\beta (1-p)^2.
\end{align*}
Thus, by \eqref{eq:la2} we have
\begin{align*}
    \lambda_{lb}&=\frac{3}{(\mathbbm{E}[B_1]+\mathbbm{E}[B_1\wedge B_2])P_1+2\mathbbm{E}[B_1\wedge B_2]P_2}\cr
    &=\frac{3}{\frac{2}{3}(\mathbbm{E}[B_1]+2\mathbbm{E}[B_1\wedge B_2])}\cr
    &=\frac{9}{2\alpha  p (5-2p)+2\beta(1-p)(3-2p)}.
\end{align*}

We use the following two results in the proofs that follow. 

\begin{proposition}[Balance]\label{prop:balance}
In a R($d$) system, with $\bar{0}$ initial condition, the $d$ largest workloads are equal at all times. 
\end{proposition}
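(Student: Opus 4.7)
The plan is to proceed by induction on the time slot $t$, with the base case $t=0$ immediate from the empty initial condition \eqref{eq:empty}. For the inductive step, I would fix $t\geq 1$, assume the $d$ largest coordinates of $\bar{W}(t-1)$ are all equal to some value $w\geq 0$, and let $L=\{i\in [K]:W_i(t-1)=w\}$, so that $|L|\geq d$ and $W_i(t-1)<w$ for $i\notin L$. The goal is to verify the same property for $\bar{W}(t)$.

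When no job arrives at time $t$, every coordinate is mapped to $[\,\cdot\,-1]^+$: all of $L$ becomes $[w-1]^+$, and the remaining coordinates, being strictly smaller than $w$, become at most $[w-1]^+$, so the invariant is preserved. When a job does arrive, let $S=G_d(t)$ and set $v=\min_{j\in S}(W_j(t-1)+B_j(t))$; the derivation in \eqref{eq:dynamics}--\eqref{eq:dynamics 2} shows that the post-arrival, pre-service workload equals $\max(W_i(t-1),v)$ at $i\in S$ and $W_i(t-1)$ at $i\notin S$. I would then split into two sub-cases. If $v\leq w$, then every $i\in L$ keeps workload $w$, every $i\in S\setminus L$ ends at $\max(W_i(t-1),v)\leq w$, and every remaining coordinate is strictly below $w$, so the set of top coordinates after the arrival still contains $L$ and equals $w$. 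If $v>w$, then for every $i\in S$ we have $W_i(t-1)\leq w<v$, so the new workload is exactly $v$, while for $i\notin S$ the workload is at most $w<v$; thus $S$ itself is the set of $d$ top coordinates, all equal to $v$. The final service step of slot $t$ sends the top coordinates $w'$ uniformly to $[w'-1]^+$ while strictly smaller coordinates go to at most $[w'-1]^+$, preserving equality of the $d$ largest.

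The main obstacle is simply the two-case analysis in the arrival step; the underlying physical content is that the tasks of a single job share a common termination instant (either completion of the fastest or simultaneous cancellation of the rest), which forces equalization within $S$ whenever $v$ exceeds the current workloads in $S$, and otherwise leaves the existing top set $L$ untouched. Since the claim is exactly Property 1 of \cite{raaijmakers2019redundancy}, an alternative and admissible route is simply to invoke that reference.
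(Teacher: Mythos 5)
Your argument is correct, but it is genuinely more than what the paper does: the paper's entire proof is a citation of Property 1 in \cite{raaijmakers2019redundancy}, whereas you supply a self-contained induction on the time slot. Your key computation --- that on an arrival the post-arrival, pre-service workload at a chosen server $i\in S=G_d(t)$ equals $\max(W_i(t-1),v)$ with $v=\min_{j\in S}(W_j(t-1)+B_j(t))$ --- follows directly from \eqref{eq:dynamics 2}, since $W_i(t-1)+[v-W_i(t-1)]^+=\max(W_i(t-1),v)$, and your two sub-cases ($v\leq w$ leaves the top set $L$ at value $w$ with every other coordinate at most $w$; $v>w$ equalizes all of $S$ at the new maximum $v$) correctly exhaust the possibilities; the uniform $[\,\cdot\,-1]^+$ service step then preserves the property because the map $x\mapsto[x-1]^+$ is non-decreasing. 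What your route buys is a verification that the balance property actually holds in \emph{this} paper's discrete-time, Bernoulli-arrival formulation with its specific update rule \eqref{eq:dynamics}, rather than taking on faith that the property proved in the (differently formulated) reference transfers; what the citation buys is brevity. One small point of care in your write-up: the inductive invariant should be read as ``the maximum workload is attained by at least $d$ servers,'' which is what your set $L$ with $|L|\geq d$ encodes and is exactly equivalent to the $d$ largest workloads being equal, so the possibility that additional servers join the top set in the case $v=w$ causes no harm.
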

\begin{proof}
The proof is given in Property 1 in \cite{raaijmakers2019redundancy}.
\end{proof}

\begin{proposition}[Average workload]\label{prop:average workload}
Consider a R($d$) system with $K$ servers. Let $\bar{s}=(s_1,\ldots,s_K)\in {\cal{S}}$ such that $s_1\leq\ldots\leq s_K$. Let $s_{i,j}=s_j-s_i$. Denote
\begin{equation*}%\label{eq:f_notation}
f_i(\bar{s})=\mathbbm{E}\big[\mathbbm{1}_i\big(\wedge_{j\in G_d} [B_j+s_{i,j}]^+\big)\big]
\end{equation*}
Then
\begin{align}\label{eq:avg_work}
    \mathbbm{E}[A_i(1) \mid \bar{W}(0)=\bar{s}]=f_i(\bar{s}),
\end{align}
and
\begin{equation}\label{eq:monotone_f_i}
f_1(\bar{s})\geq \ldots \geq f_K(\bar{s}).
\end{equation}
\end{proposition}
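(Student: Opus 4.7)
The plan is to handle the two claims in turn. For the identity \eqref{eq:avg_work}, this is essentially a direct substitution into the dynamics \eqref{eq:dynamics}: conditioning on $\bar{W}(0)=\bar{s}$ gives $\Delta_{i,j}(0)=s_j-s_i=s_{i,j}$, and combined with the independence of the primitives $(G_d(1),\bar{B}(1))$ from $\bar{W}(0)$ and the shorthand \eqref{eq:simple notation}, the expression $f_i(\bar{s})$ is produced directly from the definition of $A_i(1)$.

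The monotonicity \eqref{eq:monotone_f_i} is the substantive part; it suffices to show $f_i(\bar{s}) \geq f_{i+1}(\bar{s})$ for each $i \in [K-1]$. Expanding
$$f_i(\bar{s}) = \frac{1}{{K \choose d}} \sum_{G \in {\cal{G}}_d,\, i \in G} \mathbbm{E}\Big[\wedge_{j \in G}[B_j + s_{i,j}]^+\Big],$$
and similarly for $f_{i+1}(\bar{s})$, I would pair terms across the two sums according to whether (a) $\{i,i+1\}\subseteq G$ or (b) exactly one of $\{i,i+1\}$ belongs to $G$.

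In case (a), the same set $G$ appears in both sums, so compare integrands term by term. Since $s_{i+1} \geq s_i$, the identity $s_{i,j} - s_{i+1,j} = s_{i+1}-s_i \geq 0$ gives $[B_j + s_{i,j}]^+ \geq [B_j + s_{i+1,j}]^+$ almost surely for every $j \in G$, so the minimum is larger for the $f_i$ integrand. In case (b), pair each subset $G$ with $i\in G$, $i+1\notin G$ with $G' = (G \setminus \{i\})\cup\{i+1\}$; this is a bijection onto the family of subsets contributing to $f_{i+1}(\bar{s})$ in this case. Setting $G_0 = G\setminus\{i\} = G' \setminus \{i+1\}$, the comparison reduces to
$$\mathbbm{E}\Big[B_i \wedge \wedge_{j\in G_0}[B_j+s_{i,j}]^+\Big] \;\;\text{vs.}\;\; \mathbbm{E}\Big[B_{i+1} \wedge \wedge_{j\in G_0}[B_j+s_{i+1,j}]^+\Big].$$
The homogeneity Assumption \eqref{as:sy} applied to the $d$-subsets $\{i\}\cup G_0$ and $\{i+1\}\cup G_0$, followed by the identical permutation of coordinates on both sides, yields $(B_i,(B_j)_{j\in G_0}) \overset{d}{=} (B_{i+1},(B_j)_{j\in G_0})$. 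Coupling both integrands onto a common random vector and using $s_{i,j} \geq s_{i+1,j}$ pointwise makes the first integrand dominate the second in this coupling.

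Summing the paired inequalities over both cases and dividing by ${K \choose d}$ gives $f_i(\bar{s}) \geq f_{i+1}(\bar{s})$. The main subtlety is case (b): one must construct the bijection $G \leftrightarrow G'$ correctly and carefully invoke the homogeneity assumption to couple integrands across subsets that differ in a single index, since the assumption is stated in terms of ordered tuples rather than raw exchangeability.
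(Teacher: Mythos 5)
Your proof is correct and takes essentially the same approach as the paper's: relation \eqref{eq:avg_work} is read off the dynamics \eqref{eq:dynamics}, and the monotonicity is established by the same case analysis on whether $i$ and $i+1$ belong to $G_d$, including the same bijection $G \leftrightarrow (G\setminus\{i\})\cup\{i+1\}$ in the mixed case. You merely spell out the coupling via the homogeneity assumption that the paper's argument leaves implicit.
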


\begin{proof}
Relation \eqref{eq:avg_work} is an immediate consequence of \eqref{eq:dynamics} and \eqref{eq:simple notation}.  We now prove that for any $i\in \{1,\ldots,K-1\}$ we have $f_i(\bar{s})\geq f_{i+1}(\bar{s})$ and the relation \eqref{eq:monotone_f_i} follows. This simply means that the average incoming workload is monotonic non-increasing when considering servers ordered by their workload.

Consider the four possibilities describing whether or not $i$ and $i+1$ are members of $G_d$. Under the event that $i$ and $i+1$ are not in $G_d$, both receive zero work. If both are in $G_d$, then $i+1$ cannot receive more work than $i$ due the the minimum taken in \eqref{eq:dynamics}. Finally, for any event under which $i$ is chosen and $i+1$ is not, there is an event with equal probability where $i$ is not chosen and $i+1$ is, and the rest $d-1$ servers stay the same, and vice versa. Again, by \eqref{eq:dynamics}, the amount of average work that enters server $i$ under the first event is no less than the amount of average work that enters server $i+1$ under the second event. This concludes the proof. 
\end{proof}

%%%%%%%%%%%%%%%%%%%%%%%%%%%%%%%%%%%%%%%%%%%%555
%%% lambda 2 %%%%%%%%%%%%%%%%%%%%%%%%
\subsection{Proofs of main results}

\noindent \textbf{Proof of Proposition \ref{thm:max_problem}}. Since $\bar{W}$ is irreducible and aperiodic, by Theorem 3.3.7 of \cite{srikant2013communication}, it suffices to prove that if $\lambda{<}\lambda_m$, then a Lyapunov drift condition holds. Namely, that there exist a function ${\cal{L}}: {\cal{S}} \rightarrow  \mathbbm{R}_+$, a finite set $F \subset {\cal{S}}$ and constants $\epsilon$, $C_1>0$ such that
\begin{align}\label{eq:drift}
\mathbbm{E}[ \Delta{\cal{L}}(t{+}1)\mid \bar{W}(t)=\bar{s}]\leq\twopartdef{-\epsilon}{\bar{s} \in {\cal{S}}\setminus F}{C_1}{\bar{s} \in F}    
\end{align} 
where $\Delta{\cal{L}}(t{+}1)$ denotes the drift at time slot $t+1$, namely
\begin{equation}\label{eq:deltaL}
    \Delta{\cal{L}}(t{+}1)={\cal{L}}(\bar{W}({t{+}1}))-{\cal{L}}(\bar{W}(t)).
\end{equation} 
While we assumed in \eqref{eq:empty} that the system starts empty, with a slight abuse of notation and for simplicity, in what follows we suppress the dependence on $t$ by writing $\bar{W}(0)$ and $\bar{W}(1)$ instead of $\bar{W}(t)$ and $\bar{W}(t+1)$, respectively. We choose the quadratic function 
\begin{equation}\label{eq:lyap fn}
    {\cal{L}}(\bar{s}){=}\sum_{i=1}^Ks_i^2.
\end{equation}
Since the members of $\bar{B}$ have finite first and second moments, it is trivial that the left hand side of \eqref{eq:drift} is bounded from above by some positive constant $C_1$, uniformly over all $\bar{s} \in F$, for any finite set $F \subset {\cal{S}}$. We omit the details. 

Next, we choose the finite set $F$ to be of the form
\begin{equation}\label{eq:F}
    F=\{\bar{s}=(s_1,\ldots,s_K) \in {\cal{S}} :\max_{1\leq i \leq K}s_i<C_2 \},
\end{equation}
where $C_2>0$ is a large enough constant whose value is determined later in the proof. 

Consider a state $\bar{s} \in {\cal{S}} \setminus F$. By the homogeneity of the servers, the symmetric distribution of $\bar{B}$ and the uniformly at random routing choice, we can, without loss of generality, consider $\bar{s}=(s_1,\ldots,s_K)$ such that 
\begin{equation*}%\label{eq:order}
    0\leq s_1 \leq \ldots \leq s_K.
\end{equation*}
By the definition of $F$ in \eqref{eq:F}, we have
\begin{equation}\label{eq:max}
    s_K>C_2,
\end{equation}
and by Proposition \ref{prop:balance},
\begin{equation*}%\label{eq:balance}
    s_{K-d+1}=\ldots=s_K.
\end{equation*}
By \eqref{eq:dynamics}, using the notation in \eqref{eq:simple notation} and denoting $A_i=A_i(1)$, we have 
\begin{equation*}
    W_i(1)=[s_i+\mathbbm{1}_A A_i-1]^+
\end{equation*}
and therefore
\begin{equation}\label{eq:get rid of plus}
    W_i^2(1)\leq s_i^2+2s_i(\mathbbm{1}_A A_i-1)+(\mathbbm{1}_A A_i-1)^2.
\end{equation}
Using \eqref{eq:avg_work}, \eqref{eq:deltaL}, \eqref{eq:lyap fn} and \eqref{eq:get rid of plus} we obtain
{\small
\begin{align}\label{eq:drift calc}
\mathbbm{E}[ \Delta{\cal{L}}(1)\mid & \bar{W}(0){=}\bar{s}]=\sum_{i =1}^K\mathbbm{E}\Big[ W_i^2(1)-W_i^2(0)\mid \bar{W}(0)=\bar{s}\Big]\cr
&\leq 
2\sum_{i =1}^K s_i\Big(\lambda\mathbbm{E}[A_i{\mid} \bar{W}(0)=\bar{s}]-1\Big)+C_3 \cr
&=2\sum_{i =1}^K s_i\Big(\lambda \mathbbm{E}[\mathbbm{1}_i]f_i(\bar{s})-1\Big)+C_3,
\end{align} 
\par}
where the constant $C_3>0$ satisfies
\begin{equation}\label{eq:C_3}
    \sum_{i \in [K]}\mathbbm{E}[(\mathbbm{1}_A A_i{-}1)^2]\leq \sum_{i \in [K]}\mathbbm{E}[(\sum_{i=1}^dB_i)^2]\leq C_3.
\end{equation}
The existence of $C_3$ is due to the finite first and second moments of the members of $\bar{B}$. 

Denote 
\begin{equation}\label{eq:def of d}
  d_i=\lambda \mathbbm{E}[\mathbbm{1}_i]f_i(\bar{s})-1.   
\end{equation}
The argument proceeds by analyzing $\sum_{i=1}^K s_i d_i$. 
First, by \eqref{eq:i is chosen} and \eqref{eq:monotone_f_i}, 
\begin{equation*}
    d_1 \geq \ldots \geq d_K.
\end{equation*} Second, since $\lambda<\lambda_m$, there exists $\epsilon_0>0$ such that $\lambda=\lambda_m-\epsilon_0$. Thus
\begin{align*}
    \sum_{i =1}^K d_i&=\lambda \sum_{i =1}^K \mathbbm{E}[\mathbbm{1}_i]f_i(\bar{s})-K \cr
    &=(\lambda_m-\epsilon_0) \sum_{i =1}^K \mathbbm{E}[\mathbbm{1}_i]f_i(\bar{s})-K\cr
    &\leq-\epsilon_0 \sum_{i =1}^K \mathbbm{E}[\mathbbm{1}_i]f_i(\bar{s})<0,
\end{align*}
where in the last inequality we have used the definition of $\lambda_m$ in \eqref{eq:la1}.
Denote by $k_0$ the lowest index in $[K]$ such that  $\sum_{i =1}^{k_0} d_i<0$, namely 
\begin{equation}\label{eq:k_0}
    k_0=\min{\{j \in [K]:\sum_{i =1}^{j} d_i<0\}}.
\end{equation}
Hence
\begin{equation}\label{eq:positive}
    \sum_{i =1}^{j} d_i \geq 0, \quad \forall j < k_0
\end{equation}
and
\begin{equation}\label{eq:d are negative}
    0>d_{k_0}\geq \ldots \geq d_{K}
\end{equation}

Next, we argue by induction that 
\begin{equation}\label{eq:ind}
    \sum_{i=1}^{j-1}s_i d_i\leq s_{j}\sum_{i=1}^{j-1} d_i,\quad \forall j\in \{1,\ldots,k_0\},
\end{equation}
with the convention that $\sum_{i=1}^{-1}=0$. Inequality \eqref{eq:ind} holds trivially for $j=1$. Suppose it holds for $j=j_0<k_0$. Then  
\begin{align*}
    \sum_{i=1}^{(j_0+1)-1}s_i d_i&= \sum_{i=1}^{j_0-1}s_i d_i+s_{j_0}d_{j_0}\leq
    s_{j_0}\sum_{i=1}^{j_0-1} d_i+s_{j_0}d_{j_0} \cr
    &= s_{j_0}\sum_{i=1}^{j_0} d_i\leq  s_{j_0+1}\sum_{i=1}^{(j_0+1)-1} d_i,
\end{align*}
where the first inequality is due to the induction hypothesis and the second is due to \eqref{eq:positive}. Taking $j=k_0$ in \eqref{eq:ind} yields
\begin{equation*}
    \sum_{i=1}^{k_0-1}s_i d_i\leq s_{k_0}\sum_{i=1}^{k_0-1} d_i
\end{equation*}
and therefore
\begin{align}\label{eq:almost done}
    \sum_{i=1}^{K}s_i d_i&\leq s_{k_0}\sum_{i=1}^{k_0-1} d_i+\sum_{i=k_0}^{K}s_i d_i\cr
    &=s_{k_0}\sum_{i=1}^{k_0} d_i+\sum_{i=k_0+1}^{K}s_i d_i.
\end{align}
By the definition of $k_0$ in \eqref{eq:k_0} and by \eqref{eq:d are negative}, the coefficients $\sum_{i=1}^{k_0} d_i,d_{k_0+1}, \ldots,d_K$, multiplying $s_{k_0},s_{k_0+1},\ldots,s_{K}$ respectively, are strictly negative. Define
\begin{equation}\label{eq:gamma}
   \gamma=\min{\Big\{ \sum_{i=1}^{k_0} d_i,d_{k_0+1}, \ldots,d_K\Big\}}<0 .
\end{equation}
By \eqref{eq:almost done} and \eqref{eq:gamma} we obtain
\begin{equation}\label{eq:done}
    \sum_{i=1}^{K}s_i d_i\leq \gamma (s_{k_0}+s_{k_0+1}+\ldots+s_k)\leq \gamma s_k \leq \gamma C_2,
\end{equation}
where in the last inequality we have used \eqref{eq:max}.
Combining \eqref{eq:drift calc}, \eqref{eq:def of d} and \eqref{eq:done} we obtain
\begin{equation*}
    \mathbbm{E}[ \Delta{\cal{L}}(1)\mid  \bar{W}(0){=}\bar{s}]\leq 2\gamma C_2+C_3,
\end{equation*}
where we recall from \eqref{eq:gamma} that $\gamma<0$. 

Finally, we determine $\epsilon$ and $F$ in \eqref{eq:drift}. Given the primitive arrival and service processes, the constants $C_3$ in \eqref{eq:C_3} and $\gamma$ in \eqref{eq:gamma} are given. Fix some $\epsilon>0$ and choose $C_2$ (which defines $F$) to be large enough such that $2\gamma C_2+C_3<-\epsilon$. This concludes the proof.

\qed
%%%%%%%%%%%%%%%%%%%%%%%%%%%%%%%%%%%%%%%%%%%%%%
%%%%%%%%%%%%%%%%%%%%%%%%%%%%%%%%%%%%%%%%%%%%%%

\noindent \textbf{Proof of Theorem \ref{cor:la2}}.
Recall that by \eqref{eq:la1}
\begin{equation*}
    \lambda_m=\inf_{\bar{\delta}\in {{\cal{D}}_{{\cal{S}}_0}}}\Bigg\{\frac{K}{\sum_{i \in [K]}\mathbbm{E}\big[\mathbbm{1}_i\big(\wedge_{j\in G_d} [B_j+\delta_{i,j}]^+\big)\big]}\Bigg\},
\end{equation*}
where ${\cal{D}}_{{\cal{S}}_0}$ is given in \eqref{eq:def of domain}.
Denote by $G_d(k)$ the $k$th member of $G_d$ such that 
\begin{equation}\label{eq:def of gd1}
    G_d(1)<\ldots<G_d(d).
\end{equation}
With this notation at hand we can write
\begin{align}\label{eq:get g in there}
    \sum_{i \in [K]}\mathbbm{E}\big[\mathbbm{1}_i&\big(\wedge_{j\in G_d} [B_j+\delta_{i,j}]^+\big)\big]\cr
    &=\sum_{k=1}^d\mathbbm{E}\big[\wedge_{j=1}^d [B_{G_d(j)}+\delta_{G_d(k),G_d(j)}]^+\big].
\end{align}
%%%%%%%%%%%%%%%%%%%%%%%%%%%%%%%%%%%%%%%%%%%%%%
\noindent \textbf{Proof that $\lambda_m<1$}.
Taking only the first term of the right hand side of \eqref{eq:get g in there} yields
\begin{align}\label{eq:take only first term}
\sum_{k=1}^d\mathbbm{E}\big[&\wedge_{j=1}^d [B_{G_d(j)}+\delta_{G_d(k),G_d(j)}]^+\big]\cr
&\geq \mathbbm{E}\big[\wedge_{j=1}^d [B_{G_d(j)}+\delta_{G_d(1),G_d(j)}]^+\big].
\end{align}
By the definition of ${\cal{D}}_{{\cal{S}}_0}$, $\delta_{i,j}$ and $\{G_d(1),\ldots,G_d(d)\}$ in  \eqref{eq:def of domain}, \eqref{eq:def of deltas} and \eqref{eq:def of gd1}, respectively, we have that $\delta_{G_d(1),G_d(j)}{\geq} 0$. Therefore 
\begin{align}\label{eq: uniform bound}
\mathbbm{E}\big[\wedge_{j=1}^d [B_{G_d(j)}+\delta_{G_d(1),G_d(j)}]^+\big]&\geq \mathbbm{E}\big[\wedge_{j=1}^d B_{G_d(j)}\big]\cr
&=\mathbbm{E}\big[\wedge_{j=1}^d B_{j}\big],
\end{align}
where the last transition is due to the symmetry assumed in Assumption \eqref{as:sy} and the fact that $G_d$ and $\bar{B}$ are independent.
Combining \eqref{eq:get g in there}, \eqref{eq:take only first term} and \eqref{eq: uniform bound} yields 
\begin{align*}%\label{eq:bound work from below}
\sum_{i \in [K]}\mathbbm{E}\big[\mathbbm{1}_i&\big(\wedge_{j\in G_d} [B_j+\delta_{i,j}]^+\big)\big]]
\geq \mathbbm{E}\big[\wedge_{j=1}^d B_{j}\big]>0.
\end{align*}
Since this bound holds for every $\bar{\delta}\in {{\cal{D}}_{{\cal{S}}_0}}$, using \eqref{eq:la1} we obtain
$\lambda_m \leq K/\mathbbm{E}\big[\wedge_{j=1}^d B_{j}\big]<1,$
where the last transition is due to the time scaling assumption in \eqref{eq:time scale condition}. \qed \\

%%%%%%%%%%%%%%%%%%%%%%%%%%%%%%%%%%%%%%%%%%%%%%
\noindent \textbf{Proof that $\lambda_{lb}\leq\lambda_m$}.
Fix $\bar{\delta}\in {{\cal{D}}_{{\cal{S}}_0}}$.
On $\Omega_m$ in \eqref{eq:omega m}, exactly $m$ out of the $d$ largest workloads are members of $G_d$ and are given by $\{G_d(d-m+1),\ldots,G_d(d)\}$. 
By \eqref{eq:def of domain}, \eqref{eq:def of deltas} and \eqref{eq:def of gd1} we have
\begin{equation*}%\label{eq:delta is zero}
    \delta_{G_d(k),G_d(j)} \leq 0, \mbox{ for } k>\min\{ j,d-m\}.
\end{equation*}
So, for $1\leq k \leq d-m$, we have
\begin{align}\label{eq:key move 1}
\big(\wedge_{j=1}^d [B_{G_d(j)}&+\delta_{G_d(k),G_d(j)}]^+\big)\mathbbm{1}_{\Omega_m}\cr
&\leq\big(\wedge_{j=1}^k [B_{G_d(j)}+\delta_{G_d(k),G_d(j)}]^+\big)\mathbbm{1}_{\Omega_m}\cr
&\leq\big(\wedge_{j=1}^k B_{G_d(j)}\big)\mathbbm{1}_{\Omega_m}
\end{align}
and for $d-m+1\leq k \leq d$ we have 
\begin{align}\label{eq:key move 2}
\big(\wedge_{j=1}^d [B_{G_d(j)}&+\delta_{G_d(k),G_d(j)}]^+\big)\mathbbm{1}_{\Omega_m}\cr
&\leq\big(\wedge_{j=1}^d B_{G_d(j)}\big)\mathbbm{1}_{\Omega_m}.
\end{align}
Using \eqref{eq:get g in there}, \eqref{eq:key move 1} and \eqref{eq:key move 2} we obtain
\begin{align}\label{eq:almost done 2}
    &\sum_{i \in [K]}\mathbbm{E}\big[\mathbbm{1}_i\big(\wedge_{j\in G_d} [B_j+\delta_{i,j}]^+\big)\big]\cr
    &=\mathbbm{E}\Big[\sum_{k=1}^d\big(\wedge_{j=1}^d [B_{G_d(j)}+\delta_{G_d(k),G_d(j)}]^+\big)\Big]\cr
    &=\sum_{m=0}^d\mathbbm{E}\Big[\sum_{k=1}^d\big(\wedge_{j=1}^d [B_{G_d(j)}+\delta_{G_d(k),G_d(j)}]^+\big)\mathbbm{1}_{\Omega_m}\Big]\cr
    &\leq \sum_{m=0}^d \Big(\sum_{k=1}^{d-m}\mathbbm{E}[\wedge_{j=1}^kB_j]+m\mathbbm{E}[\wedge_{j=1}^dB_j] \Big)P(\Omega_m),
\end{align}
where in the last inequality we used the fact that $\mathbbm{1}_{\Omega_m}$ is independent of $\bar{B}$ and $G_d$.
Finally, since the bound in \eqref{eq:almost done 2} is finite and uniform over $\bar{\delta}\in {{\cal{D}}_{{\cal{S}}_0}}$, we obtain
\begin{align*}
    \lambda_m&=\frac{K}{\sup_{\bar{\delta}\in {{\cal{D}}_{{\cal{S}}_0}}}\big\{\sum_{i \in [K]}\mathbbm{E}\big[\mathbbm{1}_i\big(\wedge_{j\in G_d} [B_j+\delta_{i,j}]^+\big)\big]\big\}}\cr
    &\geq \frac{K}{ \sum_{m=0}^d \Big(\sum_{k=1}^{d-m}\mathbbm{E}[\wedge_{j=1}^kB_j]+m\mathbbm{E}[\wedge_{j=1}^dB_j] \Big)P(\Omega_m)}\cr
    &=\lambda_{lb},
\end{align*}
which concludes the proof. \qed \\

%%%%%%%%%%%%%%%%%%%%%%%%%%%%%%%%%%%%%%%%%%%%%%%%%%%%%%
%%%%%%%%%%%%%%%%%%%%%%%%%%%%%%%%%%%%%%%%%%%%%%%%%%%%%%
\section{Simulation}\label{sec:sim}
In this section we present simulation results which shed some light on the behaviour of the stability region of R($d$), our lower bound $\lambda_{lb}$ and the known lower bound $1/\mathbbm{E}[\wedge_{k=1}^dB_k]$.

We consider the R($d$) system with $K=10$ servers, working according to the FIFO service discipline. The service time distribution of tasks, $\bar{B}$, is comprised of i.i.d  random variables $B_1,\ldots,B_{10}$ such that 
\begin{equation*}
    B_1=
    \begin{cases}
    10 & \text{with probability } 0.9 \\
    100 & \text{with probability } 0.1. 
    \end{cases}
\end{equation*}
Since $B_1\geq 10$, the time scaling condition \eqref{eq:time scale condition} holds, and thus the stability region is a subset of $[0,1]$ for all values of $d$. For each value of $d\in\{1,\ldots,10\}$ we run simulations on a large number of time slots for different loads (namely, values of the arrival rate $\lambda$) in $[0,1]$. The number of time slots was chosen such that the difference in the outputs of different runs at the maximal load were negligible.

For each simulation run corresponding to a specific $(d,\lambda)$ pair, we calculate the running average workload in the system (over all time-slots, after an initial duration required for convergence). Whenever the Markov chain is positive recurrent (i.e. the system is stable), it is also ergodic. Thus
the running average workload converges, and one simulation run is enough to calculate the steady-state average workload. 

The idea is that for values of $d$ where the stability region is not known, the steady state average workload dependence on $\lambda$, and, specifically, for what loads it becomes very large, suffices as an approximation for the actual stability region. We also calculate our lower bound $\lambda_{lb}$ and the known lower bound $1/\mathbbm{E}[\wedge_{k=1}^dB_k]$. Figure \ref{fig:stability region 10 servers} depicts the results. 

%%%%%%%%%%%%%%%%%%%%%%%%%%%%%%%%%55
%%%%%%%%%%%%%%%%%%%%%%%%%%%%%%%%%%%
\begin{figure*}%{\linewidth} \centering
\includegraphics[width=0.6\textwidth,height=7cm,center]{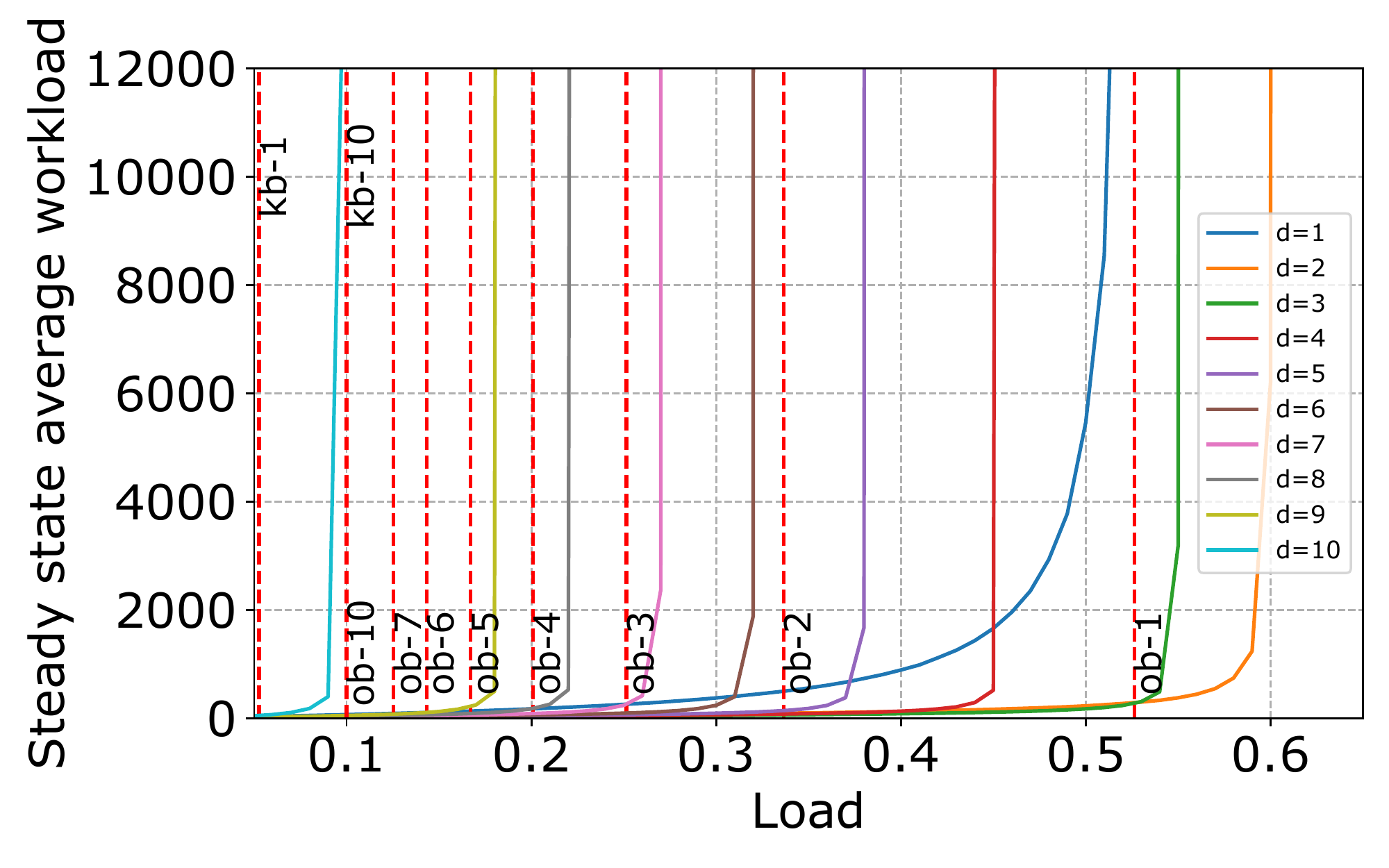}
\caption{\footnotesize Steady state average workload vs. load (values of the arrival rate $\lambda$) for $K=10$ servers and different values of $d$. The vertical lines marked `ob-$i$' and `kb-$i$' stand for 'our lower bound' $\lambda_{lb}$ and the 'known lower bound' of $1/\mathbbm{E}[\wedge_{k=1}^dB_k]$, respectively, for $d=i$. Only kb-1 and kb-10 are shown because all other values of kb are between them. The values of ob-8 and ob-9 are not shown and satisfy ob-$10<$ob-$9<$ob-$8<$ob-7.
}
\label{fig:stability region 10 servers}
\end{figure*}
%%%%%%%%%%%%%%%%%%%%%%%%%%%%%%%%%%55
%%%%%%%%%%%%%%%%%%%%%%%%%%%%%%%%%%%%

The stability region for $d=1$ is marked by the vertical line `ob-$1$' (which stands for `our bound') and equals approximately 0.52. The simulation indicates that the stability region for $d=2$ is the largest and equals approximately 0.6. The stability region for $d=3$ is still larger than that for $d=1$ and equals approximately 0.56. For larger values of $d$ the stability region decreases substantially until reaching around 0.1 for $d=10$. The non-monotone behaviour of the stability region with respect to the values of $d$ is evidence for why it is challenging to study it. 

On the one hand, our bound is not tight. For example, ob-$2$ marks our bound for $d=2$ and equals almost half of the actual stability region. On the other hand, it is much better than the known lower bound for all $1 \leq d<K$. Another interesting result of the simulation is that the steady state average workload decreases substantially for $d>1$ compared to the case where $d=1$ (no replication). In fact, it can be seen that most of the improvement is achieved by using $d=2$ instead of $d=1$. If the system under consideration is currently working at around 0.3 load, then our lower bound guarantees that the system remains stable for $d=2$ while obtaining the benefits of replication.  

Next, to further compare the lower bounds, we consider the R($d$) system with $K=30$ servers. Instead of choosing a specific distribution for $\bar{B}$, we specify the connection between the different expected values needed to calculate the lower bounds. Figure \ref{fig:ev ratio 1} depicts the results for the case where 
\begin{equation*}
    \mathbbm{E}[\wedge_{k=1}^dB_k]=K/d^{0.5}, \quad d \in \{1,\ldots,K\},
\end{equation*}
and Figure \ref{fig:ev ratio 1} depicts the results for the case where 
\begin{equation*}
    \mathbbm{E}[\wedge_{k=1}^dB_k]=2K/d^{1.1}, \quad d \in \{1,\ldots,K\}.
\end{equation*}
As mentioned in the introduction, no one bound implies the other and their values highly depend on the distribution of $\bar{B}$ and the value of $d$. Taking the maximum of the lower bounds yields a new and improved lower bound.
%%%%%%%%%%%%%%%%%%%%%%%%%%%%%%%%%%%%%
%%%%%%%%%%%%%%%%%%%%%%%%%%%%%%%%%%%%
\begin{figure}[!t]
\centering

\begin{subfigure}[c]{\linewidth} \centering
\includegraphics[width=0.99\textwidth]{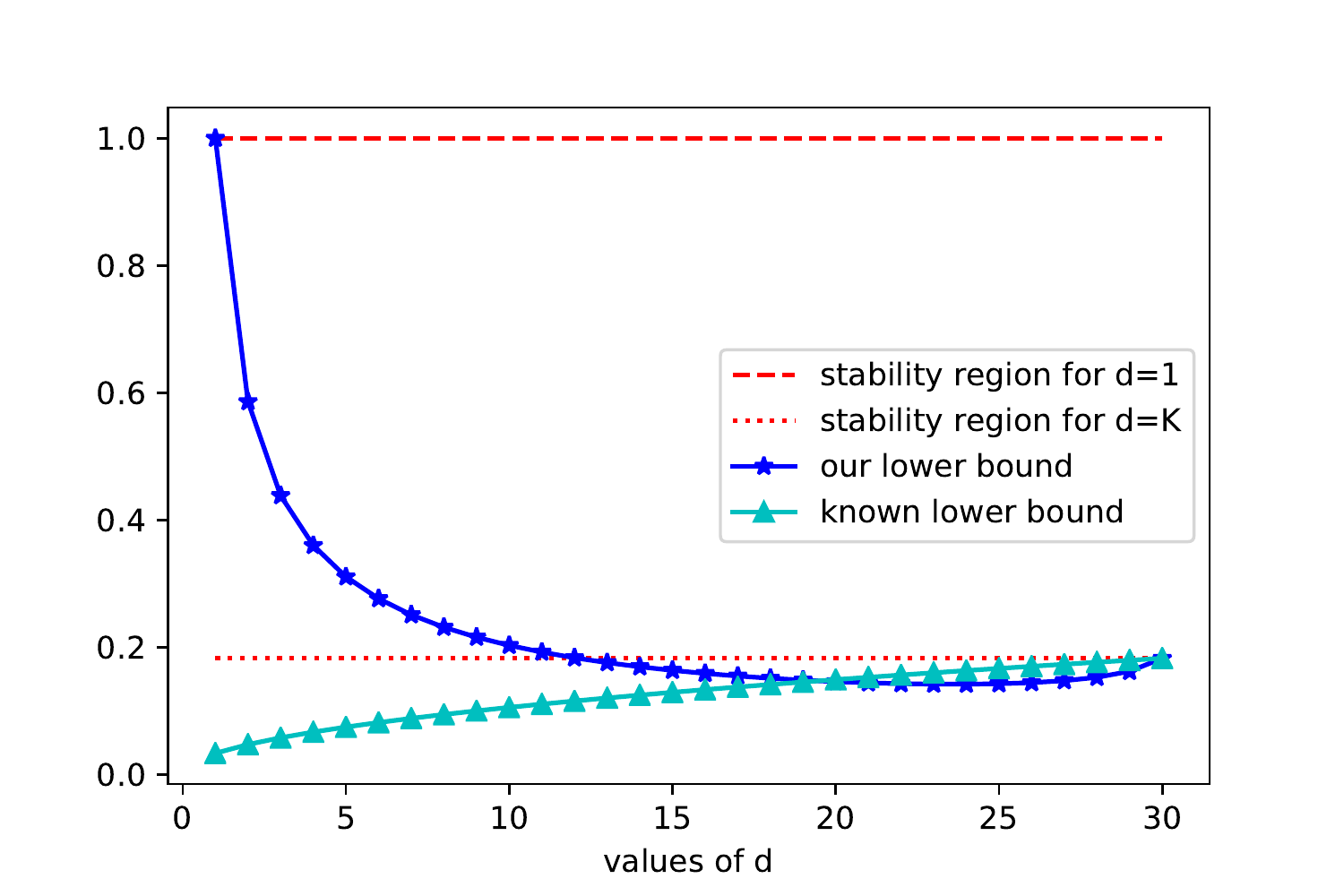}
\caption{\footnotesize Our lower bound and the known lower bound for $K=30$ and different values of $d$ for the case where $\mathbbm{E}[\wedge_{k=1}^dB_k]=K/d^{0.5}$ 
}
\label{fig:ev ratio 1}
\end{subfigure}

\begin{subfigure}[c]{\linewidth} \centering
%\MyIncludeGraphics[clip, trim=0 10.3cm 0 10.5cm, width=0.99\linewidth]{figs/fix_3.pdf}
\includegraphics[width=0.99\linewidth]{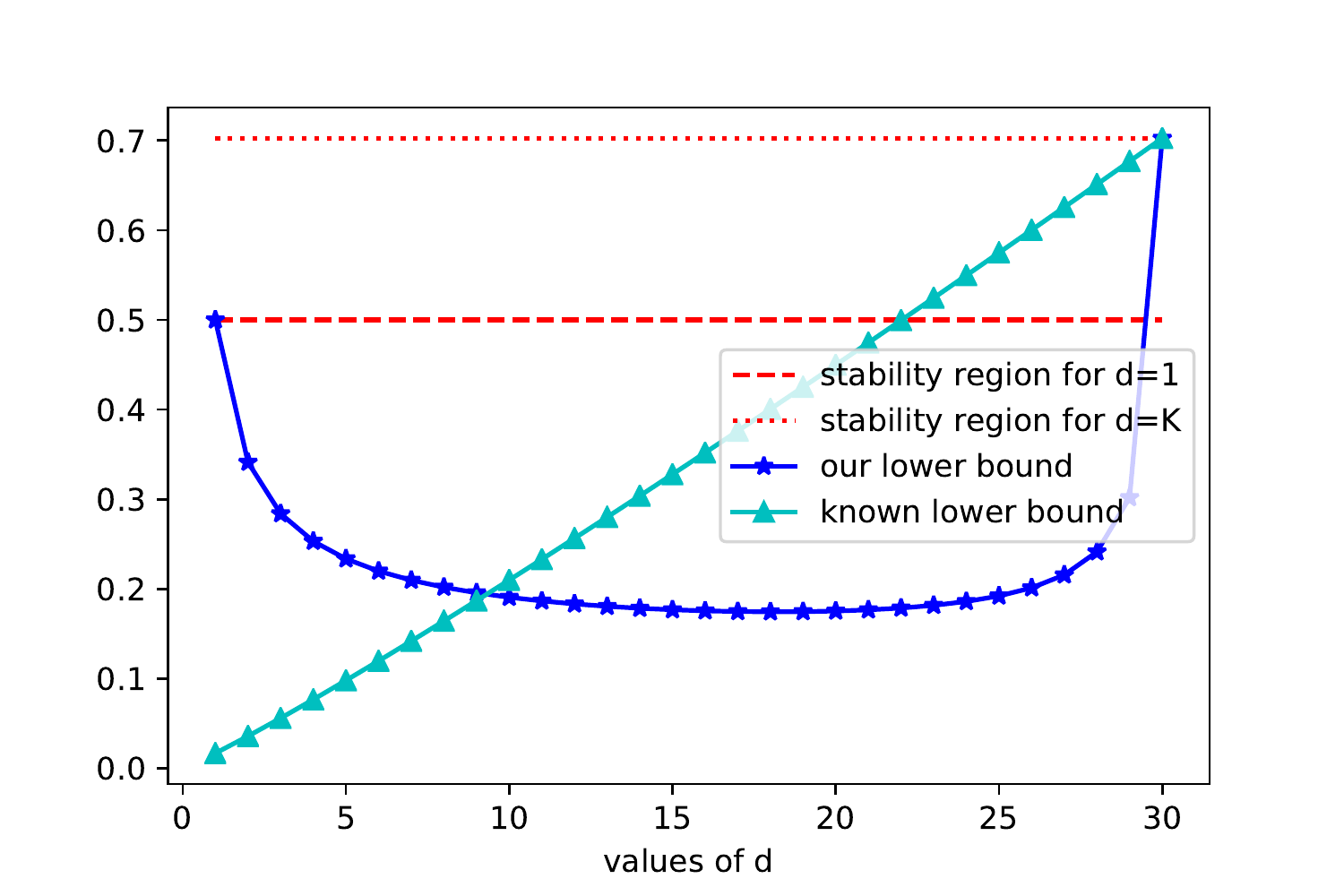}
\caption{\footnotesize Our lower bound and the known lower bound for $K=30$ and different values of $d$ for the case where $\mathbbm{E}[\wedge_{k=1}^dB_k]=2K/d^{1.1}$ . 
}
\label{fig:anchor_speed_random}
\end{subfigure}

%\caption{blah. }
\label{fig:anchor_speed}
\end{figure}
%%%%%%%%%%%%%%%%%%%%%%%%%%%%%%%%%%%%%
%%%%%%%%%%%%%%%%%%%%%%%%%%%%%%%%%%%%

\section*{Acknowledgment}
The author would like to thank Rami Atar, Isaac Keslassy and Shay Vargaftik for their useful feedback. This research was supported in part by the Hasso Plattner Institute.

% if have a single appendix:
%\appendix[Proof of the Zonklar Equations]
% or
%\appendix  % for no appendix heading
% do not use \section anymore after \appendix, only \section*
% is possibly needed

% use appendices with more than one appendix
% then use \section to start each appendix
% you must declare a \section before using any
% \subsection or using \label (\appendices by itself
% starts a section numbered zero.)
%

\begin{comment}
\appendices
\section{Proof of the First Zonklar Equation}
Appendix one text goes here.

% you can choose not to have a title for an appendix
% if you want by leaving the argument blank
\section{}
Appendix two text goes here.

% use section* for acknowledgment
\section*{Acknowledgment}

The authors would like to thank...

% Can use something like this to put references on a page
% by themselves when using endfloat and the captionsoff option.
\ifCLASSOPTIONcaptionsoff
  \newpage
\fi

% trigger a \newpage just before the given reference
% number - used to balance the columns on the last page
% adjust value as needed - may need to be readjusted if
% the document is modified later
%\IEEEtriggeratref{8}
% The "triggered" command can be changed if desired:
%\IEEEtriggercmd{\enlargethispage{-5in}}

% references section

% can use a bibliography generated by BibTeX as a .bbl file
% BibTeX documentation can be easily obtained at:
% http://mirror.ctan.org/biblio/bibtex/contrib/doc/
% The IEEEtran BibTeX style support page is at:
% http://www.michaelshell.org/tex/ieeetran/bibtex/
%\bibliographystyle{IEEEtran}
% argument is your BibTeX string definitions and bibliography database(s)
%\bibliography{IEEEabrv,../bib/paper}
%
% <OR> manually copy in the resultant .bbl file
% set second argument of \begin to the number of references
% (used to reserve space for the reference number labels box)

% biography section
% 
% If you have an EPS/PDF photo (graphicx package needed) extra braces are
% needed around the contents of the optional argument to biography to prevent
% the LaTeX parser from getting confused when it sees the complicated
% \includegraphics command within an optional argument. (You could create
% your own custom macro containing the \includegraphics command to make things
% simpler here.)
%\begin{IEEEbiography}[{\includegraphics[width=1in,height=1.25in,clip,keepaspectratio]{mshell}}]{Michael Shell}
% or if you just want to reserve a space for a photo:

%%%%%%%%%%%%%%%%%%%%%%%%%%%%%%%%%%%%%%%%%%%%%%%%%%%%%%%%%%%%%%%%%%
%%%%%%%%%%%%%%%%%%%%%%%%%%%%%%%%%%%%%%%%%%%%%%%%%%%%%%%%%%%%%%%%%%%
 %uneeded stuff from template
\begin{IEEEbiography}{Michael Shell}
Biography text here.
\end{IEEEbiography}

% if you will not have a photo at all:
\begin{IEEEbiographynophoto}{John Doe}
Biography text here.
\end{IEEEbiographynophoto}

% insert where needed to balance the two columns on the last page with
% biographies
%\newpage

\begin{IEEEbiographynophoto}{Jane Doe}
Biography text here.
\end{IEEEbiographynophoto}

% You can push biographies down or up by placing
% a \vfill before or after them. The appropriate
% use of \vfill depends on what kind of text is
% on the last page and whether or not the columns
% are being equalized.

%\vfill

% Can be used to pull up biographies so that the bottom of the last one
% is flush with the other column.
%\enlargethispage{-5in}

% that's all folks
\end{comment}
%%%%%%%%%%%%%%%%%%%%%%%%%%%%%%%%%%%%%%%%%%%%%%%%%%%%%%%%%%%%%%%%%%
%%%%%%%%%%%%%%%%%%%%%%%%%%%%%%%%%%%%%%%%%%%%%%%%%%%%%%%%%%%%%%%%%%%
\bibliographystyle{ieeetr}
\bibliography{bib}

\begin{thebibliography}{1}

\bibitem{joshi2012coding}
G.~Joshi, Y.~Liu, and E.~Soljanin, ``Coding for fast content download,'' in
  {\em 2012 50th Annual Allerton Conference on Communication, Control, and
  Computing (Allerton)}, pp.~326--333, IEEE, 2012.

\bibitem{dean2013tail}
J.~Dean and L.~A. Barroso, ``The tail at scale,'' {\em Communications of the
  ACM}, vol.~56, no.~2, pp.~74--80, 2013.

\bibitem{ananthanarayanan2013effective}
G.~Ananthanarayanan, A.~Ghodsi, S.~Shenker, and I.~Stoica, ``Effective
  straggler mitigation: Attack of the clones,'' in {\em NSDI}, pp.~185--198,
  2013.

\bibitem{vulimiri2013low}
A.~Vulimiri, P.~B. Godfrey, R.~Mittal, J.~Sherry, S.~Ratnasamy, and S.~Shenker,
  ``Low latency via redundancy,'' in {\em Proceedings of the ninth ACM
  conference on Emerging networking experiments and technologies},
  pp.~283--294, ACM, 2013.

\bibitem{gardner2017redundancy}
K.~Gardner, M.~Harchol-Balter, A.~Scheller-Wolf, M.~Velednitsky, and
  S.~Zbarsky, ``Redundancy-d: The power of d choices for redundancy,'' {\em
  Operations Research}, vol.~65, no.~4, pp.~1078--1094, 2017.

\bibitem{gardner2016queueing}
K.~Gardner, S.~Zbarsky, S.~Doroudi, M.~Harchol-Balter, E.~Hyyti{\"a}, and
  A.~Scheller-Wolf, ``Queueing with redundant requests: exact analysis,'' {\em
  Queueing Systems}, vol.~83, no.~3-4, pp.~227--259, 2016.

\bibitem{raaijmakers2019redundancy}
Y.~Raaijmakers, S.~Borst, and O.~Boxma, ``Redundancy scheduling with scaled
  bernoulli service requirements,'' {\em Queueing Systems}, vol.~93, no.~1-2,
  pp.~67--82, 2019.

\bibitem{anton2019stability}
E.~Anton, U.~Ayesta, M.~Jonckheere, and I.~M. Verloop, ``On the stability of
  redundancy models,'' {\em arXiv preprint arXiv:1903.04414}, 2019.

\bibitem{srikant2013communication}
R.~Srikant and L.~Ying, {\em Communication networks: an optimization, control,
  and stochastic networks perspective}.
\newblock Cambridge University Press, 2013.

\end{thebibliography}


\begin{thebibliography}{1}

\bibitem{IEEEhowto:kopka}
H.~Kopka and P.~W. Daly, \emph{A Guide to \LaTeX}, 3rd~ed.\hskip 1em plus
  0.5em minus 0.4em\relax Harlow, England: Addison-Wesley, 1999.

\end{thebibliography}

\end{document}